\newcommand{\I}{I}
\newcommand{\X}{\mathcal{X}}
\newcommand{\V}{\mathcal{V}}
\newcommand{\B}{\mathcal{B}}
\newcommand{\gr}{\mathcal{G}}
\newcommand{\cG}{\mathscr{G}}
\newcommand{\cF}{\mathscr{F}}
\newcommand{\expe}{\mathbb{E}}
\newcommand{\bb}[1]{\bm{#1}}
\newcommand{\one}{\mathds{1}}
\newcommand{\bv}{\bb{b}}
\newcommand{\vv}{\bb{v}}
\newcommand{\vvec}{\bb{v}}
\newcommand{\poa}{\textsc{PoC}}
\newcommand{\gammav}{\bb{\gamma}}
\newcommand{\bcce}{\Sigma^{\ast}}
\newcommand{\alg}{\textsc{Alg}}
\newcommand{\defeq}{\coloneqq}%
\newcommand\add[1]{\textcolor{black}{#1}}
\newtheorem{assumption}{Assumption}
\newtheorem{remark}{Remark}
\newtheorem{definition}{Definition}
\title{Fair Equilibria in Sponsored Search Auctions: The Advertisers’ Perspective}
\author{
 Georgios Birmpas$^1$\and
 Andrea Celli$^2$\and
 Riccardo Colini-Baldeschi$^{3}$\and
 Stefano Leonardi$^1$\\
 \affiliations
 $^1$Department of Computer, Control and Management Engineering, Sapienza University, Rome, Italy\\
 $^2$Department of Computing Sciences, Bocconi University, Milan, Italy\\
 $^3$Core Data Science, Meta, London, UK\\
 \emails
 \{birbas, leonardi\}@diag.uniroma1.it,
andrea.celli2@unibocconi.it,
rickuz@fb.com
}
\begin{document}

\maketitle

\begin{abstract}
    In this work we introduce a new class of mechanisms composed of a traditional Generalized Second Price (GSP) auction and a fair division scheme, in order to achieve some desired level of fairness between groups of Bayesian strategic advertisers.
    We propose two mechanisms, $\beta$-Fair GSP and GSP-EFX, that compose GSP with, respectively, an \emph{envy-free up to one item}, and an \emph{envy-free up to any item} fair division scheme. The payments of GSP are adjusted in order to compensate advertisers that suffer a loss of efficiency due the fair division stage. 
    We investigate the strategic learning implications of the deployment of sponsored search auction mechanisms that obey to such fairness criteria. We prove that, for both mechanisms, if bidders play so as to minimize their external regret they are guaranteed to reach an equilibrium with good social welfare. We also prove that the mechanisms are budget balanced, so that the payments charged by the traditional GSP mechanism are a good proxy of the total compensation offered to the advertisers. Finally, we evaluate the quality of the allocations through experiments on real-world data. 
\end{abstract}

\section{Introduction}
    
Over the last decades, online advertising has been one of the main tools for small and medium business (SMB) to grow. Online advertising allows SMBs to reach potential customers without geographical or demographic barriers.
Moreover, it offers better return-on-investments than other advertising mediums thanks to its highly personalized system.
Given its crucial role in the growth of businesses (and, in its turn, society), it is natural to study how the mechanisms implemented in online advertising platforms can be improved to obey to different fairness criteria for advertisers.
There are indeed various settings where one may care about balancing ads allocations, even just between a majority and a minority group of advertisers. 
For example, one setting is when large companies and small businesses are competing for the same set of users, and the platform may want to ensure that small companies get reasonable visibility despite smaller budgets. 
As another example, there may be businesses based in different geographical locations but offering similar products/services. These businesses may have unbalanced budgets due to their location, but one may want to guarantee to each business some visibility in a target set of users regardless of geographical attributes. More in general, one could think of incorporating fairness in sponsored search auctions whenever two different parties are advertising to increase users’ awareness about some sensitive topic.

We study how fairness notions borrowed from the fair division literature can be used to model fairness with respect to advertisers. 
Notions from fair division already found applications in online advertising~\cite{chawla2020fairness,ilvento2020multi}. In the fair division literature, the dominant notion of fairness aims at providing guarantees for individual agents (see, e.g.,~\cite{brandt2016handbook,moulin2003fair}). However, we argue that a \emph{group approach} would be more practical, better aligned with societal expectations, and easier to implement.
Following recent works in group fair division literature~\cite{kyropoulou2020almost,manurangsi2017asymptotic,manurangsi2019computing,segal2019democratic}, we propose to use \emph{envy-freeness}~\cite{foley1967} to study \emph{group fairness}. \emph{Group envy-freeness} guarantees that no group of agents envies the allocation obtained by any other group. Unfortunately, envy-freeness cannot be guaranteed for indivisible items even in simple settings with two agents and one item. 
Thus, we focus on two natural relaxations: envy-freeness up to \emph{one} good (EF1)~\cite{Budish11,lipton2004approximately}, and envy-freeness up to \emph{any} good (EFX)~\cite{CaragiannisKMPS19,GMT14}.

In practice, any attempt to guarantee such properties in real advertising platforms will inevitably collide with real-world engineering constraints. Therefore, a credible solution should be a mechanism that can be easily integrated with a  pre-existing auction framework, without requiring substantial changes to it. 
We focus on the \emph{generalised second price} (GSP) auction framework~\cite{edelman2007internet}, which is one of the most frequently adopted mechanisms for the allocation of advertising opportunities in large Internet advertising companies.  
In this setting, we show the existence of simple mechanisms that guarantee
some notion of group EF1 (resp.,  group EFX) for advertisers. Such mechanisms can be implemented as a post-auction layer to be run after a standard GSP mechanism. 
In the spirit of the work by~\cite{DworkI19}, we study the properties of such composite mechanisms. 

\paragraph{Original contributions}
We focus on a Bayesian setting with incomplete information (i.e., the valuations for advertising opportunities are stochastic, and each bidder does not observe the realized valuations of the other bidders). For each auction, bidders are divided in two groups (a majority group and a minority group) based on their characteristics and competitiveness. Given the different characteristics, the users interact in different ways with the ads from the two groups. This is modeled through group specifics click-through rates and quality factors.\footnote{The online advertising problem with group specifics click-through rates has already been formalized in literature in the Ad-Types setting \cite{Colini-Baldeschi20,elzayn2021equilibria}.}
First, in \cref{sec:fair gsp auctions}, we introduce two notions of group envy-freeness (\emph{group $\beta$}-EF1 and \emph{group $\beta$}-EFX), parameterized on a factor $\beta$ which allows the platform to tune the strength of the fairness requirement. We show that group $\beta$-EF1 and group $\beta$-EFX allocations always exist for two groups with monotonic click-through rates. Moreover, they can be computed efficiently with two fair division schemes which are, respectively,  ``group versions'' of the  round robin procedure, and the envy-cycle elimination algorithm by Lipton \emph{et al.}~\shortcite{lipton2004approximately}. 
We define fairness with respect to reported bidder's valuations. %
Indeed, we argue that an auction is perceived to be fair if the  slots are divided between the majority/minority group in a fair way, given that prices are not publicly known to all advertisers in real-world sponsored search auctions.  Considering the social welfare of the allocation is also the customary approach in the fair division literature, in which notions such as EF1 and EFX are  defined on valuations (see, e.g.,  \cite{Budish11,DBLP:journals/corr/abs-1909-07650}).
Then, in \cref{sec:efficiency and bb}, we study the efficiency and budget-balance of the mechanisms resulting from the composition of GSP with the two fair division schemes. We show that the social welfare of the composite mechanisms is a good approximation of the optimum. Moreover, we prove that the welfare loss experienced by the advertisers due to the fairness constraints can be partly compensated through monetary incentives via redistribution of the GSP payments.
Finally, \cref{sec:poc} studies the behavior of the composite mechanisms when bidders behave as no-regret learning agents (i.e., they take decisions so as to minimize their external regret) and they are conservative (i.e.,  they do not overbid).\footnote{The assumption of having conservative bidders has often been used in the analysis of ad auction mechanisms (see, e.g.,~\cite{PT10,CKS16,Feldman2013SimultaneousAA}).  The rationale for this is usually that, in practice, bidders are only partially informed and, thus, they may be more inclined to avoid risks (i.e., to submit bids that might yield negative payoffs). } 
We show that the learning dynamic originating from the interaction with our composite mechanisms converges to a \emph{Bayesian coarse correlated equilibrium} with good social welfare properties characterized through their {\em price of composition}.
We complement our theoretical results by evaluating the quality of our fair sponsored search mechanisms on real-world data (\cref{sec:exp}).

\section{Preliminaries}

    Throughout the paper, bold case letters denote column vectors. Given a vector $\bb{y}$, its $i$-th component is denoted by $y_i$. 
    The set $\{1,\ldots,x\}$ is denoted by $[x]$, and $\Delta_\X$ is the $|\X|$-dimensional simplex over the discrete set $\X$.
    
    \paragraph{Sponsored search framework}
    There is a set $\I$ of $n$ bidders and a set $J$ of $m$ slots. An \emph{outcome} is an assignment of bidders to slots. Each bidder $i$ has a private type $v_i$, representing their valuation on the item which is being sold. The vector of types is denoted as $\bb{v}=(v_1,\ldots,v_n)$. 
    Each bidder $i$ belongs to a group from a finite set of possible groups $\gr$. The function $g:[n]\to \gr$ maps bidders to their group, that is, we write $g(i)$ to denote the group to which bidder $i$ belongs.
    We assume that bidders may belong to one of two groups $\gr=\{h,\ell\}$ (e.g., a majority group and a minority group) and the two groups may have different sizes.
    We denote the set of bidders belonging to the two groups by $I_h$ and by $I_\ell$. 
    \add{We make the assumption that $|I_h|\ge m$ and $|I_\ell|\ge m$. This assumption is reasonable in the context of large Internet advertising markets.}
    As it is customary in the literature, we use the model of separable click probabilities (see, e.g.,~\cite{edelman2007internet,varian2007position}), in which each slot $j$ is associated with a \emph{click-through rate} $\alpha_{j,g(i)}$ for group $g(i)$.
    We assume that, for each group $w\in \gr$, $\alpha_{1,w}\ge\alpha_{2,w}\ge\ldots\ge\alpha_{m,w}$, and without loss of generality we take $n=m$. Group-specific click-through rates can model, for example, cases in which different groups of advertisers mainly resort to different media types.
    Finally, each group $w\in\gr$ is associated with a \emph{quality factor} $\gamma_w\in [0,1]$, which reflects the \emph{clickability} of ads from bidders belonging to group $w$.
    For example, ads coming from advertisers in the majority group may have higher clickability than ads from advertisers in the minority groups because of differences in the available budgets to develop the campaigns.
    Quality factors are private knowledge of the advertising platform, and not known by the bidders.~\footnote{
        Our results hold also in the case of advertiser-dependent clickability, because that does not alter the relative ordering of items. 
    }
    
   \paragraph{GSP auction} A \emph{mechanism} elicits a bid $b_i\in\mathbb{R}_{\ge 0}$ for each bidder $i$, which is interpreted as a type declaration, and computes an outcome as well as a price $p_i(\bv,\bb{\gamma})$ for each bidder $i$. 
    We denote by $\pi(\bv,\bb{\gamma},j)$ the bidder assigned to slot $j$ when the mechanism observes the bid vector $\bv$ and vector of quality factors $\bb{\gamma}$.  We also denote by $\nu(\bv,\bb{\gamma},i)$ the slot assigned to bidder $i$ when the mechanism observes the bid vector $\bv$ and vector of quality factors $\bb{\gamma}$.
    When the vectors of bids and quality factors are clear from the context we simplify the notation by writing $\pi(j)$, $\nu(i)$, and $p_i$ 
    in place of $\pi(\bv,\bb{\gamma},j)$, $\nu(\bv,\bb{\gamma},i)$, and $p_i(\bv,\bb{\gamma})$, respectively.
    The value perceived by bidder $i$ when they are allocated $\nu(i)$ is $\alpha_{\nu(i), g(i)}\gamma_{g(i)} v_i$, and their utility is
    $u_i(\bv,\vvec,\gammav)\defeq\alpha_{\nu(i),g(i)} \gamma_{g(i)}v_i-p_i$.
    We focus on a family of mechanisms derived from the \emph{Generalized Second Price} (GSP) auction~\cite{varian2007position}.
    In a GSP auction the mechanism assigns the slots in order from $1$ to $m$ and sets $\pi(\bv,\bb{\gamma},j)$ to be the bidder with the highest \emph{effective bid} $\gamma_{g(i)}\alpha_{j,g(i)} b_i$ not yet assigned (breaking ties arbitrarily). 
    For any bid profile $\bv$, quality factors $\gammav$ and for each $j\in[m]$, $i=\pi(j)$, the price charged to bidder $i$ is computed as
    \begin{equation}\label{eq:prices}
    p^\textsc{g}_i(\bv,\gammav)\defeq\frac{\gamma_{g(\pi(j+1))} \alpha_{j,g(\pi(j+1))}}{\gamma_{g(i)}} b_{\pi(j+1)},
    \end{equation}
    where we set $b_{n+1}=0$.\footnote{Alternatively, we could charge bidder $\pi(j)$ with the \emph{threshold price}, which is the smallest effective bid $p_i^\textsc{t}$ that guarantees them the same slot. Observe that $p_i\leq p_i^\textsc{t}$ since the next highest bidder on $j$ is not necessarily $\pi(j+1)$ if the bidders have different quality factors.}
    The mechanism is {\em Individually Rational} (IR) if, for each bidder $i\in \I$, $u_i(\bv,\vvec,\gammav)\geq 0$, for all $\bv$, $\vvec$, and $\gammav$.   The mechanism is  {\em Individually Rational at the Equilibrium} (IRE) if it is IR  at the equilibrium bid vectors.

    \paragraph{Online Bayesian framework} The $n$ bidders participate in a series of GSP auctions. 
    At each iteration $t$, each bidder $i$ observes a valuation $v^t_i$ for the item being sold at time $t$. Let $\V_i$ be the finite set of types of bidder $i$.
    The vector of types $\vvec^t=(v_1^t,\ldots,v_n^t)$ is drawn, at each $t$, from a (possibly correlated) probability distribution $\cF$ supported on a finite set of joint types $\V$, that is, $\V\defeq\times_{i\in[n]}\V_i$.
    Moreover, at each $t$, a vector of quality factors $\gammav^t\in [0,1]^{|\gr|}$ is drawn from a (possibly correlated) distribution $\cG$.
    Each bidder $i$ has an arbitrary finite set of available bids $\B_i\subseteq\mathbb{R}_{\ge0}$, with $\bar B_i\defeq\max \B_i$ and $\bar B_i\ge\sup\V_i$.
    Moreover, we denote by $\B\defeq\times_{i\in[n]}\B_i$ the set of all possible joint bid profiles.
    A bidding strategy $\sigma_i$ for bidder $i$ is a (possibly randomized) mapping from their types $\V_i$ to their available bids $\B_i$. We represent such strategies as a $|\V_i|\times|\B_i|$ right stochastic matrix in which each row specifies a well-defined probability distribution over bids: bidder $i$'s strategy space is $\Sigma_i\defeq\mleft\{\sigma_i\in\mathbb{R}_{\ge0}^{|\V_i|\times|\B_i|}:\sigma_i\bb{1}=\bb{1}\mright\}$.
    We observe that bidders cannot condition their bids on their quality factors $\gammav$, since they are only known to the platform, and not to advertisers.
    Finally, we define the set of joint bidding strategies as 
    \[
    \Sigma\defeq \mleft\{\sigma\in\Delta_{\V\times\B}\,:\,\sum_{\bv\in\B}\sigma[\vvec,\bv]=\cF(\vvec),\,\,\forall\vvec\in\V\mright\}.\footnote{Given a matrix $M$, we denote by $M[i]$ its $i$-th row vector, and by $M[i,j]$ the entry in position $(i,j)$.}
    \]
    At each iteration $t$, bidder $i$ places bids according to a bidding strategy $\sigma_i^t\in\Sigma_i$. In particular, bidder $i$ observes its own type $v_i^t$, and then submits a bid $b_i^t\sim\sigma_i^t[v_i^t]$. 
    Then, bidder $i$ experiences a reward which we define as a function $u_i^t:\B_i\to\mathbb{R}$. The utility function $u_i^t$ observed at time $t$ implicitly depends on the realized vector of bids $\bv^t_{-i}$, and quality factors $\gammav^t$, and it is such that, for each possible bid $b\in\B_i$, 
    $
         u_i^t(b)\defeq 
            \alpha^t_{j,g(i)} \gamma^t_{g(i)} v^t_i - p_i((b,\bv_{-i}^t),\gammav),
    $
    with $j=\nu((b,\bv^t_{-i}),\bb{\gamma}^t,i)$.
    
    \paragraph{Regret and equilibria}
    Given a sequence of decisions $(b_i^1,\ldots,b_i^T)$ up to time $T$, the \emph{external regret} of bidder $i$ in type $v_i$ is how much they regret not having played the best fixed action in hindsight at each iteration in which they observed type $v_i$. Formally, the regret experienced by bidder $i$ under a certain type $v_i\in\V_i$ is
    \[
    R_{v_i}^T \defeq \max_{\hat b\in\B_i}\mleft\{\sum_{t=1}^T\one[v_i=v_i^t]\mleft(u_i^t(\hat b)-u_i^t(b_i^t)\mright)\mright\}.
    \]
    Then, the cumulative external regret of bidder $i$ at time $T$ is
  $
    R_i^T\defeq \sum_{v_i\in \V_i} R_{v_i}^T.
  $
    Let $(\bv^t)_{t=1}^T$ be the sequence of decisions made by the bidders up to time $T$.
    Then, the \emph{empirical frequency of play} $\bar\sigma^T\in\Delta_{\V\times\B}$ obtained from the realized sequence of types $(\vvec^t)_{t=1}^T$, and from the sequence of play $(\bv^t)_{t=1}^T$ is such that, for every $(\vvec,\bv)\in\V\times\B$:
    \[
    \bar\sigma^T[\vvec,\bv]\defeq \frac{1}{T}|\mleft\{1\le t\le T: \bv^t=\bv, \vvec^t=\vvec\mright\}|.
    \]
    If each bidder $i$ plays so as to obtain a regret $R_i^T$ growing sublinearly in $T$, then, in the limit as $T\to\infty$, the empirical frequency of play $\bar\sigma^T$ is guaranteed to converge almost surely to a \emph{Bayesian coarse correlated equilibrium} (BCCE)~\cite{CKK15,hartline2015no}.%
    We denote by $\bcce\subseteq\Sigma$ the set of all Bayesian coarse correlated equilibria of the game.

\section{Group Fairness in GSP Auctions}\label{sec:fair gsp auctions}

In this section we present the two group fair division schemes that will be added as a post-auction layer to GSP.

\paragraph{Preliminary definitions} Consider an arbitrary stage $t$ of the repeated auctions process (dependence on $t$ will be omitted when clear from the context).
For each group $w\in\gr$, let $\alg_w(\bv)\defeq \sum_{i\in I_w}\gamma_w\alpha_{\nu(i),w}b_i$ be the value obtained by group $w$ via a generic mechanism with allocation rule $\nu$, on bid vector $\bv$. Since $\gr=\{h,\ell\}$, the overall value is  $\alg(\bv)= \alg_h(\bv) + \alg_\ell(\bv)$.
Moreover, given a set of slots $J'\subseteq [m]$ assigned to group $w\in\gr$, we define $$\alg_w(J',\bv) \defeq \sum_{j\in [|J'|]}\gamma_{w} \alpha_{J'[j],w} b_{I_w[j]},$$
where $J'[j]$ denotes the slot with the $j$-th click-through rate among slots in $J'$, and $I_w[j]$ is the bidder belonging to $I_w$ with the $j$-th effective bid in decreasing order (e.g., $I_h(1)$ is the bidder of group $h$ with the highest effective bid).
Intuitively, $\alg_w(J',\bv)$ is the maximum value attainable by group $w$ when it is allocated $J'$ and the bid vector is $\bv$. 

\paragraph{$\beta$-EF1 mechanism}
The first mechanism that we describe employs a fair division scheme that guarantees the following notion of group fairness. %
\begin{definition}(Group $\beta$-EF1 fairness)\label{def:ef1}
    Let $\beta\defeq \xi_\ell/\xi_h$, with $\xi_h\in\mathbb{N}^+$ and $\xi_\ell =1$,  %
    \add{$\xi_h+\xi_\ell\le m$}.
	We say that an allocation is {\em group $\beta$ envy-free up to one good} ($\beta$-\emph{EF1 fair}) for $\beta\le 1$ and bid profile $\bv$ if, for each pair of groups $h,\ell \in \gr$, there exists one item $j_h\in J_h$ such that
	$
	\alg_\ell(J_\ell,\bv)\ge \beta\, \alg_\ell
(J_h \setminus \{j_h\},\bv).
	$
\end{definition}

\noindent
A group $\beta$-EF1 fair allocation can be obtained through a round robin procedure that assigns the first slot to group $h$, the second slot to group $\ell$, and then, in each of the following rounds, $\xi_h$ slots to group $h$ and $\xi_\ell=1$ slots to group $\ell$. The proof of this result is very similar to the one  for the classical EF1 fair division scheme by Markakis~\shortcite{Markakis17}. 
As an example, if we assume group $h$ to be the majority group (i.e., advertisers from group $h$ are allocated the highest slot in the ranking), then the result of the application of the group $\beta$-EF1 round robin procedure is the shift of advertisers assigned to a position $j\in J_h$ to position at most $\lceil (1+\beta) j\rceil -1$.
For more details on how this fair division scheme is implemented see the extended version of the paper.
We observe that a round robin procedure guarantees group EF1 even when the number of groups is $|\gr|>2$. %
The reason is that, for any pair $w,w'\in\gr$, and positions $j\in[|J_w|],j'\in[|J_{w'}|]$, with $j\le j'$, it holds $\alpha_{J_w[j],w}\ge\alpha_{J_{w'}[j'],w}$. Moreover, it is possible to show that a round robin procedure also guarantees group $\beta$-EF1 with respect to valuations (proofs can be found in the extended version of the paper).
\begin{restatable}{theorem}{efoneval}\label{thm:ef1 val}
Given a bid profile $\bv$, the allocation computed by the composite mechanism is $\beta$-EF1 fair with respect to the valuation profile $\vv$. In particular, given the allocation of slots to the two groups $J_h,J_\ell$, for each pair of groups $h,\ell\in \gr$, there exists one item $j\in J_\ell$ such that $\alg_h(J_h,\vv)\ge \beta\, \alg_h(J_\ell \setminus \{j\},\vv)$.
\end{restatable}

\noindent Intuitively, this is because the $\beta$-EF1 mechanism computes $J_h,J_\ell$ without employing the reported bid profile, which is used only to determine the per-group ranking.

\paragraph{$\beta$-EFX mechanism} The second notion of fairness which we consider is group $\beta$-EFX fairness. 

\begin{definition}(Group $\beta$-EFX fairness)\label{def:efx}
	An allocation is {\em group $\beta$-envy free up to any good} ($\beta$-{\em EFX fair}) for $\beta\le 1$ and bid profile $\bv$ if, for each pair of groups $h,\ell\in \gr$, and for each item $j_h\in J_h$, it holds
	$
	\alg_\ell(J_\ell,\bv)\ge \beta\, \alg_\ell(J_h \setminus \{j_h\},\bv).
	$
\end{definition}

\noindent
A group $\beta$-EFX fair division scheme can be obtained through a ``group version'' of the envy-cycle elimination algorithm by Lipton \emph{et al.}~\shortcite{lipton2004approximately}.
In particular, we propose the \emph{Group Envy-Cycle-Elimination algorithm} (GECE). The GECE algorithm can be summarized as follows: denote by $J_h$ and $J_\ell$ the set of slots assigned, respectively, to group $h$ and $\ell$. We say that group $h$ envies group $\ell$ if $\alg_h(J_h,\bv)<\beta \alg_h(J_\ell,\bv)$. 
Initially, all the slots are not assigned, that is, $J_h=J_\ell=\varnothing$.
Then, the algorithm iterates through the slots in decreasing order of click-through rate. The first slot is assigned to group $h$. For each subsequent slot $j$, the algorithm checks if groups $h$ and $\ell$ envy each other, and, if this is the case, the algorithm swaps their allocations.
Otherwise, if group $\ell$ does not envy group $h$, then the next slot is assigned to group $h$, else, if $\ell$ envies $h$, the slot is assigned to group $\ell$.  
In the following theorem, we prove that the GECE algorithm is guaranteed to obtain a $\beta$-EFX allocation. 

\begin{restatable}{theorem}{EFXgece}\label{th:gece}
The allocation computed by the group envy-cycle-elimination (GECE) algorithm is group $\beta$-EFX fair. 
\end{restatable}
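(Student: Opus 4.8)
The plan is to argue by induction on the number of slots processed by GECE, maintaining the invariant that after each slot has been assigned, the current allocation $(J_h,J_\ell)$ is group $\beta$-EFX fair with respect to the slots assigned so far. Since the algorithm iterates through slots in decreasing order of click-through rate, I would think of the process as revealing one slot at a time, always the ``smallest so far''. The base case is trivial: after the first slot is given to $h$ we have $J_\ell=\emptyset$, so group $\ell$ trivially does not envy $h$ (its bundle is empty and $\alg_\ell(\emptyset)=0\le\beta\,\alg_\ell(J_h\setminus\{j\})$ is vacuous or immediate), and the $\beta$-EFX condition for $h$ against $\emptyset$ is vacuous.

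For the inductive step I would fix an iteration in which a new slot $j$ (the one with the smallest click-through rate among those revealed so far) is being processed, and split into cases according to the algorithm's branches: (i) $h$ and $\ell$ envy each other and the bundles are swapped; (ii) $\ell$ does not envy $h$ and $j$ is added to $J_h$; (iii) $\ell$ envies $h$ and $j$ is added to $J_\ell$. In case (iii), note that removing $j$ from $J_\ell\cup\{j\}$ — and crucially, since $j$ has the smallest click-through rate among all revealed slots, $\alpha_{j,h}\le\alpha_{j',h}$ for every other revealed slot $j'$ — means that for $h$ the ``up to any good'' removal of the \emph{worst} item from $\ell$'s new bundle recovers at worst $\alg_h(J_\ell)$ (the bundle before $j$ was added is a subset), which by the induction hypothesis (where $\ell$ did not envy $h$ just before, or was handled by the swap logic) satisfies $\alg_h(J_h)\ge\beta\,\alg_h(J_\ell)$; removing any other item leaves a bundle whose $h$-value is at most that of $J_\ell\cup\{j\}$ minus the smallest contribution, and here I would use monotonicity of the $\alpha$'s together with the fact that $j$ is the least valuable slot to bound $\alg_h((J_\ell\cup\{j\})\setminus\{j_h\})$ appropriately. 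In case (ii) the new slot goes to $h$, which can only help $h$ and cannot create new envy from $\ell$ (whose bundle is unchanged), so the $\beta$-EFX condition for $\ell$ against $J_h\cup\{j\}$ needs checking: since $\ell$ did not envy $h$ before ($\alg_\ell(J_\ell)\ge\beta\,\alg_\ell(J_h)$ or the relevant EFX inequality held), and adding the \emph{smallest} slot $j$ to $J_h$ means that in the ``remove any good from $h$'s bundle'' test one may always remove $j$ itself, reducing to the previous inequality. Case (i) is symmetric: after a swap each group holds the other's former bundle, and the mutual-envy trigger plus the induction hypothesis gives the needed inequalities in both directions.

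The subtle point — and what I expect to be the main obstacle — is handling the ``up to \emph{any} good'' quantifier in case (iii) cleanly: we must show $\alg_h(J_h)\ge\beta\,\alg_h((J_\ell\cup\{j\})\setminus\{j_h\})$ for \emph{every} $j_h\in J_\ell\cup\{j\}$, not just for a conveniently chosen one. The key structural fact that makes this work is that GECE only ever appends the globally-smallest-remaining slot, so at the moment $j$ is added to $J_\ell$ it is the least valuable item (in terms of $\alpha$, hence in terms of $\alg_h$-contribution under the optimal re-sorting inside $\alg_h$) that $\ell$ will hold among the currently revealed slots; removing $j$ from $\ell$'s bundle therefore removes the largest possible amount relative to what EFX demands, while removing any other $j_h\in J_\ell$ removes at least as much $h$-value as removing $j$ would, because $\alg_h$ is computed by assigning $h$'s bidders to $\ell$'s slots in sorted order and deleting a higher-$\alpha$ slot displaces a higher-$\alpha$ contribution. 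Making this displacement/monotonicity argument precise — i.e., that for the matching defining $\alg_h(\cdot)$, deleting any slot weakly dominates (from the EFX-tester's perspective) deleting the smallest one — is the technical heart of the proof; I would isolate it as a short lemma of the form ``$\alg_h(S\setminus\{j'\})\le\alg_h(S\setminus\{j\})$ whenever $\alpha_{j,h}\le\alpha_{j'',h}$ for all $j''\in S$'', proved by an exchange argument on the sorted matching. Everything else is bookkeeping over the three branches and invoking the induction hypothesis.
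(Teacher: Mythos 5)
Your proposal is correct and follows essentially the same route as the paper's proof: induction over the slots in decreasing click-through-rate order, a case split on the three branches of GECE (swap, assign to $h$, assign to $\ell$), and the key observation that the newly assigned slot is the least valuable among those revealed, so removing any single item from the enlarged bundle leaves a bundle worth at most the pre-addition bundle, which the no-envy condition already controls. The only slip is the phrase in case (ii) that ``one may always remove $j$ itself'' --- EFX quantifies over \emph{every} removed item, not a chosen one --- but the displacement lemma you isolate at the end ($\alg_h(S\setminus\{j'\})\le\alg_h(S\setminus\{j\})$ when $j$ is the smallest slot in $S$) is exactly what closes that gap and is the same monotonicity fact the paper invokes.
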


\noindent It is possible to show that a variation of GECE %
yields a group EFX fair allocation (i.e., $\beta$-EFX with $\beta=1$) even with more than 2 groups.

\begin{restatable}{corollary}{kgece}\label{th:k-gece}
The allocation computed by the k-group envy-cycle-elimination (k-GECE) algorithm is group EFX.  
\end{restatable}

\section{Efficiency and Budget Balance}\label{sec:efficiency and bb}

In this section, we study the efficiency and budget balance of the two mechanisms obtained by combining GSP with the two fair division schemes described in~\cref{sec:fair gsp auctions}. Let us denote one such composite mechanism by $\textsc{c}$.
The post-auction layer of the composite mechanism $\textsc{c}$ is modifying the GSP allocation of slots to bidders. Ideally, no bidder should be penalized for this re-allocation. Therefore, we need to update the payments so that bidders' utility is not negatively affected by the composition of GSP with the fair division scheme.
Interestingly, we can do so starting from the payments of GSP. In particular, denote by $p_i^\textsc{g}$ and by $p_i^\textsc{c}$, the payments charged to advertiser $i$ computed by GSP and by the composite mechanism, respectively. Moreover, let $\nu^\textsc{g}(i)$ and $\nu^\textsc{c}(i)$ be the slots assigned to advertiser $i$ by GSP and by the composite mechanism, respectively.
Then, we define the payments charged by the composite mechanism $\textsc{c}$ as:
\begin{equation}\label{eq:payments}
         p_i^\textsc{c}\hspace{-1mm}\defeq \hspace{-1mm}\mleft\{\hspace{-1.25mm}\begin{array}{l}
            \displaystyle
            p_i^\textsc{g} \hspace{3.6cm}\textnormal{ if }\hspace{.2cm} \nu^\textsc{c}(i)\leq \nu^\textsc{g}(i) 
            \\[5mm]
            \displaystyle 
            p_i^\textsc{g}- 2 b_i \, \gamma_{g(i)}
            \mleft(\alpha_{\nu^\textsc{g}(i),g(i)} - \alpha_{\nu^\textsc{c}(i),g(i)} \mright) \hspace{0.3cm}\textnormal{ else.}
        \end{array}\mright.
\end{equation}
The pricing rule shows that the composite mechanism \textsc{c} compensates the loss of social welfare of the advertisers that obtain a worse slot by reducing their payments. In order to ensure individual rationality, the advertisers that obtain a better slot are not asked to compensate with a higher payment. Observe that the pricing rule does not exclude positive transfers to the advertisers.

Now, let us first define an appropriate notion of budget balance for a composite mechanism which assigns a payment $p_i^\textsc{c}$ to bidder $i\in \I$, with respect to the GSP mechanism. Let $p^\textsc{g}\defeq \sum_{i\in \I} p_i^\textsc{g}$, and $p^\textsc{c}\defeq \sum_{i\in \I} p_i^\textsc{c}$.

\begin{definition}
A composite mechanism is $\alpha$-budget balanced, for $\alpha\geq 0$, if 
$
p^\textsc{g} - p^\textsc{c} \leq \alpha \, p^\textsc{g}.
$
\end{definition}
\noindent An $\alpha$-budget balanced mechanism is therefore able to cover via the GSP payments at least an $\alpha$ fraction of the compensations given to the bidders by the composite mechanisms.  

Let $\alg^\textsc{g}(\bv)$ and $\alg^\textsc{c}(\bv)$ be, respectively,  the value of the GSP mechanism, and of the fair composite mechanism on an arbitrary bid vector $\bv$.  
Moreover, we use the following two assumptions in the analysis of the composite mechanisms.
\begin{assumption}
\label{assumption1}
The value of the minority group  increases after the application of the composite mechanism, i.e., $\alg^\textsc{c}_\ell(\bv) \geq \alg^\textsc{g}_\ell(\bv).$
\end{assumption}

\begin{assumption}
\label{assumption2}
The first slot is assigned by GSP to $I_h(1)$, i.e., the first  bidder of group $h$.
\end{assumption}

\noindent \cref{assumption1} is natural since the basic goal of the proposed mechanisms would be to make the minority group better off with respect to the GSP case. 
\add{\cref{assumption2} only requires that the advertiser with the highest bid belongs to the majority group. This is natural, for example, in settings where groups have different economic power.}
First, we provide efficiency and budget balance results for the $\beta$-Fair GSP mechanism. 

\begin{restatable}{theorem}{efficiencyOne}
\label{thm:swl_ref}
The $\beta$-Fair GSP mechanism achieves a value that is at least a $1/(1+\beta)$ fraction of the value of GSP, i.e., for all bid vectors $\bv\in\B$, $\alg^\textsc{c}(\bv) \geq \alg^\textsc{g}(\bv)/(1+\beta).$
\end{restatable}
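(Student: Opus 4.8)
The statement claims that the $\beta$-Fair GSP mechanism retains at least a $1/(1+\beta)$ fraction of GSP's value. The plan is to decompose the value by group and argue separately about $I_h$ and $I_\ell$. For group $\ell$, Assumption~\ref{assumption1} already gives $\alg^\textsc{c}_\ell(\bv) \ge \alg^\textsc{g}_\ell(\bv)$, so that group's contribution to the value does not decrease at all. The entire loss therefore has to come from group $h$, whose advertisers are shifted to worse slots by the round-robin post-processing. So it suffices to prove $\alg^\textsc{c}(\bv) \ge \frac{1}{1+\beta}\alg^\textsc{g}_h(\bv) + \alg^\textsc{g}_\ell(\bv) \ge \frac{1}{1+\beta}\alg^\textsc{g}(\bv)$, i.e. to control the value lost by group $h$.

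For group $h$, the key structural fact is the one stated right after Definition~\ref{def:ef1}: under the group $\beta$-EF1 round robin (with $\beta = \xi_\ell/\xi_h \le 1$), an advertiser originally in slot $j \in J_h$ is moved to a slot of index at most $\lceil (1+\beta) j\rceil - 1 \le (1+\beta) j$. First I would use the monotonicity of click-through rates ($\alpha_{1,h} \ge \alpha_{2,h} \ge \cdots$) together with a concavity/averaging property of the $\alpha$-sequence—specifically that $\alpha_{\lceil(1+\beta)j\rceil, h} \ge \frac{1}{1+\beta}\alpha_{j,h}$, which follows because among the first $\lceil(1+\beta)j\rceil$ slots at least a $\frac{1}{1+\beta}$ fraction have click-through rate at least $\alpha_{\lceil(1+\beta)j\rceil,h}$, while $\alpha_{j,h}$ is one of those first $\lceil(1+\beta)j\rceil$ values; summing gives the needed per-slot bound. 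Applying this slot-by-slot to each bidder $i \in I_h$ with the same bid $b_i$ and the same quality factor $\gamma_h$ yields $\gamma_h b_i \alpha_{\nu^\textsc{c}(i),h} \ge \frac{1}{1+\beta}\gamma_h b_i \alpha_{\nu^\textsc{g}(i),h}$ for each $i \in I_h$, hence $\alg^\textsc{c}_h(\bv) \ge \frac{1}{1+\beta}\alg^\textsc{g}_h(\bv)$. Here I would also invoke Assumption~\ref{assumption2} to pin down that the round robin starts from $I_h(1)$ in the top slot, so the shift formula applies cleanly, and that within group $h$ the best assignment of the allotted slots to $I_h$ (which is what $\alg^\textsc{c}_h$ computes) is at least the value obtained by the specific monotone matching used above.

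Combining the two group bounds gives $\alg^\textsc{c}(\bv) = \alg^\textsc{c}_h(\bv) + \alg^\textsc{c}_\ell(\bv) \ge \frac{1}{1+\beta}\alg^\textsc{g}_h(\bv) + \alg^\textsc{g}_\ell(\bv) \ge \frac{1}{1+\beta}\big(\alg^\textsc{g}_h(\bv) + \alg^\textsc{g}_\ell(\bv)\big) = \frac{1}{1+\beta}\alg^\textsc{g}(\bv)$, which is the claim. The main obstacle I anticipate is making the slot-displacement bound fully rigorous: I need the precise statement that the round robin moves slot $j \in J_h$ to index at most $\lceil(1+\beta)j\rceil - 1$ for \emph{every} $j$ (not just asymptotically), and I need the inequality $\alpha_{\lceil(1+\beta)j\rceil,h} \ge \frac{1}{1+\beta}\alpha_{j,h}$ to hold for the whole decreasing $\alpha$-sequence; the latter is where the "ceiling vs. fraction" bookkeeping and the fact that $\xi_h,\xi_\ell$ are integers matter, and it is worth checking the boundary case where $\lceil(1+\beta)j\rceil$ exceeds $m$ (the displaced bidder then effectively gets click-through rate $0$, but by the round-robin allocation this does not occur for bidders that actually receive a slot). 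The group-$\ell$ side is immediate from Assumption~\ref{assumption1}, so no difficulty there.
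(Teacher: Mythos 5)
Your overall decomposition (group $\ell$ is protected by Assumption~\ref{assumption1}, so it suffices to control the loss of group $h$, whose bidders land at position at most $\lceil(1+\beta)j\rceil-1$) matches the paper's strategy. However, the step you use to control that loss is wrong: the per-slot inequality $\alpha_{\lceil(1+\beta)j\rceil,h}\ge\frac{1}{1+\beta}\alpha_{j,h}$ is false for a general monotone click-through-rate sequence. Take $\beta=1$, $\alpha_{1,h}=\alpha_{2,h}=1$, $\alpha_{3,h}=0$: the bidder in GSP slot $j=2$ lands at slot $\lceil 2\cdot 2\rceil-1=3$, and the claimed bound would require $\alpha_{3,h}\ge\frac{1}{2}\alpha_{2,h}$, i.e.\ $0\ge\frac12$. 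The averaging argument you sketch does not rescue it: since the sequence is decreasing, \emph{all} of the first $\lceil(1+\beta)j\rceil$ slots have rate at least $\alpha_{\lceil(1+\beta)j\rceil,h}$, which gives no lower bound on $\alpha_{\lceil(1+\beta)j\rceil,h}$ in terms of $\alpha_{j,h}$. Consequently your per-bidder claim $\gamma_h b_i\alpha_{\nu^\textsc{c}(i),h}\ge\frac{1}{1+\beta}\gamma_h b_i\alpha_{\nu^\textsc{g}(i),h}$ also fails (in the example above, the second group-$h$ bidder retains value $0$), so the route through an individual multiplicative guarantee cannot work.

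The inequality that is actually true, and that the paper uses, lives at the level of the \emph{sum}, not of individual terms. Writing $c_j\defeq\gamma_{g(\pi^\textsc{g}(j))}b_{\pi^\textsc{g}(j)}\alpha_{j,g(\pi^\textsc{g}(j))}$ for the (decreasing) sequence of GSP effective-bid values, the round-robin landing positions $\lceil(1+\beta)j\rceil-1$ sweep out the first $\xi_h$ slots of every consecutive block of $\xi_h+\xi_\ell$ slots, and each recovered term dominates the GSP term $c_p$ at its landing position $p\ge j$. Since $(c_j)_j$ is decreasing, the top $\xi_h$ terms of each block of $\xi_h+\xi_\ell$ carry at least a $\frac{\xi_h}{\xi_h+\xi_\ell}=\frac{1}{1+\beta}$ fraction of that block's total, so the recovered value is at least $\frac{1}{1+\beta}\alg^\textsc{g}(\bv)$ in aggregate even though single bidders may retain almost nothing. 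This blockwise counting argument is the missing idea; with it, your final chain $\alg^\textsc{c}(\bv)\ge\alg^\textsc{g}_\ell(\bv)+\frac{1}{1+\beta}\alg^\textsc{g}(\bv)\ge\frac{1}{1+\beta}\alg^\textsc{g}(\bv)$ goes through (and in fact yields the slightly stronger bound with the full $\alg^\textsc{g}$ rather than only $\alg^\textsc{g}_h$ in the second summand).
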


\begin{restatable}{theorem}{bbOne}\label{th:bb beta fair gsp}
The $\beta$-Fair GSP mechanism is $2$-budget balance. 
\end{restatable}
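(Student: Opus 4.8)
The plan is to bound the total compensation $\sum_{i\in\I}(p_i^\textsc{g}-p_i^\textsc{c})$ directly in terms of the GSP revenue $\sum_{i\in\I}p_i^\textsc{g}$. By the pricing rule in~\cref{eq:payments}, the only bidders that contribute to the left-hand side are those who are moved to a \emph{worse} slot by the fair division layer, and each such bidder $i$ contributes exactly $2 b_i\gamma_{g(i)}(\alpha_{\nu^\textsc{g}(i),g(i)}-\alpha_{\nu^\textsc{c}(i),g(i)})$. So it suffices to show
\[
\sum_{i\in\I} b_i\gamma_{g(i)}\bigl(\alpha_{\nu^\textsc{g}(i),g(i)}-\alpha_{\nu^\textsc{c}(i),g(i)}\bigr)^+ \;\le\; \sum_{i\in\I} p_i^\textsc{g},
\]
where $(\cdot)^+$ denotes the positive part. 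The left-hand side is precisely the drop in $\alg$-value suffered by the demoted bidders, which under the round-robin fair division (using \cref{assumption2}, so that only the majority group $h$ gets demoted, and \cref{assumption1}) equals $\alg^\textsc{g}_h(\bv)-\alg^\textsc{c}_h(\bv)$, the welfare loss of group $h$. Hence the goal reduces to proving $\alg^\textsc{g}_h(\bv)-\alg^\textsc{c}_h(\bv)\le \sum_i p_i^\textsc{g}$.

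The key structural fact I would exploit is the slot-shift bound already stated in the text: under the $\beta$-EF1 round robin, a bidder of group $h$ originally in position $j$ ends up in a position at most $\lceil(1+\beta)j\rceil-1$, i.e.\ roughly $\beta j$ slots lower. I would pair this with the telescoping structure of GSP prices. Recall from~\cref{eq:prices} that the effective price paid by the bidder in GSP slot $j$ is (at least a constant times) $\gamma\alpha_{j+1}b_{\pi(j+1)}$; summing these over $j$ gives a lower bound on $\sum_i p_i^\textsc{g}$ that telescopes against the value differences $\alpha_{j,h}-\alpha_{j+k,h}$ appearing in the welfare loss. Concretely, I would write the loss of bidder $I_h(j)$ as $\gamma_h b_{I_h(j)}(\alpha_{j',h}-\alpha_{\nu^\textsc{c},h})$ where $j'$ is its GSP slot, bound this by a sum of consecutive-slot gaps $\gamma_h b_{I_h(j)}(\alpha_{k,h}-\alpha_{k+1,h})$, and observe that since bids are monotone along the GSP ranking within group $h$, each such term is dominated by the GSP price collected on the corresponding slots. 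Summing and using that each GSP slot's price is charged once, the factor-$2$ on the compensation is exactly absorbed — giving $\sum_i(p_i^\textsc{g}-p_i^\textsc{c})\le 2\sum_i p_i^\textsc{g}$, i.e.\ the mechanism is $2$-budget balanced (note the definition allows covering an $\alpha$-fraction with $\alpha$ possibly $>1$, though here I expect the bound to come out tight at $2$).

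I expect the main obstacle to be the bookkeeping that relates the demoted bidder's bid $b_{I_h(j)}$ to the bids $b_{\pi(j+1)}$ that actually determine the GSP prices on the intermediate slots: these need not be the same bidders, and the presence of group-$\ell$ bidders interleaved in the GSP ranking, together with the quality factors $\gamma$ appearing in both the effective bids and the prices, complicates a clean term-by-term comparison. The right way around this is probably to argue at the level of the groups rather than individual bidders — bound $\alg^\textsc{g}_h(\bv)-\alg^\textsc{c}_h(\bv)$ by the total GSP value lost to group $h$'s slot shifts, and then show that this quantity is at most the total GSP revenue by a global telescoping/charging argument over the slot indices $1,\dots,m$, using $b_{\pi(j+1)}\le b_{\pi(j)}$ only in effective-bid terms where it is valid. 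A secondary subtlety is handling the ceiling in $\lceil(1+\beta)j\rceil-1$ and the boundary slots near $j=m$; these contribute only lower-order terms and can be dispatched by padding with zero-valued virtual slots as the preliminaries already allow ($n=m$, $b_{n+1}=0$).
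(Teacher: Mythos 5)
Your reduction is the same as the paper's: by \cref{eq:payments} the total compensation $\sum_i(p_i^\textsc{g}-p_i^\textsc{c})$ is exactly twice the value loss of the demoted bidders, so $2$-budget balance reduces to showing that this loss is at most $\sum_i p_i^\textsc{g}$. The genuine gap is that you never establish this central inequality. The charging scheme you propose --- decompose each demoted bidder's drop $\alpha_{j,h}-\alpha_{j',h}$ into consecutive-slot gaps and dominate each gap by the GSP price collected on the corresponding slot --- founders on exactly the obstacle you flag yourself: the price on an intermediate slot $k$ is set by $b_{\pi^\textsc{g}(k+1)}$, a different (possibly group-$\ell$) bidder whose effective bid lives on a different CTR curve and quality factor, so there is no term-by-term domination by $b_{I_h(j)}$; and your fallback (``argue at the level of groups via a global telescoping'') is stated as a hope rather than an argument. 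A secondary inaccuracy: \cref{assumption1} constrains the \emph{group-level} value of $\ell$, not individual bidders, so individual group-$\ell$ bidders can still be demoted by the round robin and must be compensated; the quantity to bound is the sum of positive-part losses over both groups, not $\alg^\textsc{g}_h(\bv)-\alg^\textsc{c}_h(\bv)$.

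The paper closes the key inequality with a much blunter one-step charge that needs no gap decomposition. By \cref{assumption2}, and since $\lceil(1+\beta)\cdot1\rceil-1=1$, the occupant of GSP slot $1$ is never demoted, so every demoted bidder sits at some GSP slot $j\ge 2$. Its loss is crudely bounded by its \emph{entire} effective value $\gamma_{g(\pi^\textsc{g}(j))}\alpha_{j,g(\pi^\textsc{g}(j))}b_{\pi^\textsc{g}(j)}$, and by \cref{eq:prices} this is at most the price $p^\textsc{g}_{\pi^\textsc{g}(j-1)}$ charged to the occupant of slot $j-1$ --- because that price is determined by $\pi^\textsc{g}(j)$'s \emph{own} bid, using only $\gamma_{(\cdot)}\le 1$ and $\alpha_{j-1,\cdot}\ge\alpha_{j,\cdot}$. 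Summing over $j$ and noting each GSP price is charged once yields loss $\le\sum_i p_i^\textsc{g}$ with no cross-bidder bid comparison at all; the only place \cref{assumption2} is truly needed is to kill the $j=1$ term so that the index shift onto slots $j-1$ is available. If you replace your telescoping step with this slot-$(j-1)$ charge, your argument matches the paper's.
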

\noindent\add{
This result does not rule out that the mechanism may suffer a net loss. However, such loss is bounded by a small constant (see \cref{sec:exp} for an empirical evaluation of such loss).
}

Second, we study efficiency and budget balance of the GSP-EFX mechanism. The allocation done by the mechanism is described in Section \ref{sec:fair gsp auctions} and it is obtained by the composition of the GSP mechanism with the group EFX fair division scheme.  The payments of GSP-EFX are computed as in~\cref{eq:payments}. %
Then, we have the following.
\begin{restatable}{theorem}{ALGEfx}
\label{thm:rel_GSP_EFX}
 GSP-EFX achieves a value that is at least a fraction $1/3$ of the value of  GSP, i.e., for all bid vectors $\bv\in\B$, $\alg^\textsc{c}(\bv) \geq \frac{1}{3} \alg^\textsc{g}(\bv)$.
\end{restatable}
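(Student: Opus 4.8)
The plan is to reduce the statement to a comparison of the value that group $h$ extracts from its own bundle of slots under the two mechanisms, and then to dispatch group $\ell$ via \cref{assumption1}. Write $J_h,J_\ell$ for the slot sets that GSP-EFX assigns to the two groups (so $J_h\sqcup J_\ell=[m]$, and by construction of GECE the highest click-through-rate slot lies in $J_h$), and $J_h^\textsc{g},J_\ell^\textsc{g}$ for the GSP slot sets. Since within a group both GSP and each of our fair division schemes match higher-bid advertisers to higher-CTR slots, the per-group values factor as $\alg^\textsc{c}(\bv)=\alg_h(J_h)+\alg_\ell(J_\ell)$ and $\alg^\textsc{g}(\bv)=\alg_h(J_h^\textsc{g})+\alg_\ell(J_\ell^\textsc{g})$. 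By \cref{assumption1}, $\alg_\ell(J_\ell)\ge\alg_\ell(J_\ell^\textsc{g})$, so it is enough to show $\alg_h(J_h)\ge\tfrac13\,\alg_h(J_h^\textsc{g})$; adding the two inequalities then yields $\alg^\textsc{c}(\bv)\ge\tfrac13\,\alg_h(J_h^\textsc{g})+\alg_\ell(J_\ell^\textsc{g})\ge\tfrac13\,\alg^\textsc{g}(\bv)$.

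First I would record two elementary facts about the operator $\alg_h(\cdot)$, which is simply the sorted--sorted matching of the best slots of its argument to the highest-bidding members of $I_h$. \emph{(i) Monotonicity:} $\alg_h$ is non-decreasing under slot-set inclusion, so $\alg_h(J_h^\textsc{g})\le\alg_h([m])$ because $J_h^\textsc{g}\subseteq[m]=J_h\cup J_\ell$. \emph{(ii) Subadditivity over the partition:} $\alg_h([m])\le\alg_h(J_h)+\alg_h(J_\ell)$. For (ii), the $|I_h|$ slots used in the optimal matching realizing $\alg_h([m])$ split into those in $J_h$ and those in $J_\ell$; on each side a used slot is re-matched with a bid that is weakly larger than the one it received in $\alg_h([m])$ --- its rank within its side is at most its global rank --- so by the rearrangement inequality the two partial matchings together dominate the global one.

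Next I would use the fairness guarantee to bound $\alg_h(J_\ell)$ in terms of $\alg_h(J_h)$. By \cref{th:gece} the GECE allocation is $\beta$-EFX, so taking $j^\ast$ to be the highest-CTR slot of $J_\ell$ gives $\alg_h(J_h)\ge\beta\,\alg_h(J_\ell\setminus\{j^\ast\})\ge\alg_h(J_\ell\setminus\{j^\ast\})$ since $\beta\le1$. Deleting the single best slot $j^\ast$ from $J_\ell$ pushes each advertiser of $I_h$ one slot down within $J_\ell$, and a telescoping estimate gives $\alg_h(J_\ell)-\alg_h(J_\ell\setminus\{j^\ast\})\le\gamma_h\alpha_{j^\ast,h}b_{I_h(1)}$. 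Because GECE places the globally best slot, say slot $1$, into $J_h$, we have $\gamma_h\alpha_{j^\ast,h}b_{I_h(1)}\le\gamma_h\alpha_{1,h}b_{I_h(1)}\le\alg_h(J_h)$. Hence $\alg_h(J_\ell)\le 2\,\alg_h(J_h)$, and chaining with (i)--(ii) we get $\alg_h(J_h^\textsc{g})\le\alg_h([m])\le\alg_h(J_h)+\alg_h(J_\ell)\le 3\,\alg_h(J_h)$, as required.

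The delicate part, I expect, is the two structural facts about $\alg_h(\cdot)$ in the second step --- especially the subadditivity inequality over the partition, whose rearrangement bookkeeping has to be done with care when $|J_h|$ or $|J_\ell|$ differs from $|I_h|$ (so that only a prefix of slots is actually matched), and the estimate for the loss incurred by removing the best slot; these are the places where the precise definition of $\alg_h$ as a sorted matching is used. The argument also tacitly relies on a single global ordering of the slots by click-through rate, so that ``the best slot'' is unambiguous and provably lands in $J_h$, which is available under the monotone-CTR assumption already used for the existence of the fair allocations; \cref{assumption2} is not needed here. I note finally that the constant produced by the last chain is really $1/(2+1/\beta)$, which is $1/3$ at $\beta=1$.
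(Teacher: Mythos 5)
Your route is genuinely different from the paper's: you push all of group $\ell$'s value through \cref{assumption1} and reduce the theorem to the single-group inequality $\alg_h(J_h)\ge\tfrac13\alg_h(J_h^\textsc{g})$, whereas the paper symmetrizes, averaging the two EFX inequalities for $h$ and $\ell$ to get $\alg^\textsc{c}(\bv)\ge\tfrac12\alg^\textsc{g}(\bv)-\tfrac12\alg^\textsc{c}(\bv)$ and then rearranging. Your structural facts (monotonicity of $\alg_h$ under slot-set inclusion, subadditivity over the partition, and the telescoping bound for deleting the best slot of $J_\ell$) are all correct and are essentially the same ingredients the paper uses implicitly in its middle inequality. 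However, there is one genuine gap: you assert that ``by construction of GECE the highest click-through-rate slot lies in $J_h$,'' and you need this to conclude $\gamma_h\alpha_{j^\ast,h}b_{I_h(1)}\le\gamma_h\alpha_{1,h}b_{I_h(1)}\le\alg_h(J_h)$. This is not guaranteed: GECE contains a swap step (Line~\ref{step: gece swap} of Algorithm~\ref{alg:GECE}) that exchanges the two bundles when the groups mutually envy each other, so slot $1$ can end up in $J_\ell$ --- indeed the paper explicitly treats exactly this contingency in the proof of \cref{thm:payment_compensation_EFX}. When slot $1\in J_\ell$ your chain breaks at its last link. It can be repaired: if $|J_\ell|\ge 2$, apply the EFX guarantee removing an item \emph{other than} slot $1$ to get $\alg_h(J_h)\ge\alg_h(\{1\})$; if $J_\ell$ is a singleton (or $J_h$ is), the telescoping step degenerates and you need a separate case, which is precisely the case analysis the paper carries out at the end of its proof.

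Two smaller points. First, the chain $\alg_h(J_h)\ge\beta\,\alg_h(J_\ell\setminus\{j^\ast\})\ge\alg_h(J_\ell\setminus\{j^\ast\})$ ``since $\beta\le1$'' is backwards as written ($\beta X\ge X$ needs $\beta\ge1$); your closing remark that the true constant is $1/(2+1/\beta)$ shows you know this, but the stated theorem is for $\beta=1$ (as the paper's proof makes explicit by invoking GECE with $\beta=1$), so you should say so up front rather than leave the inconsistency. Second, your use of \cref{assumption1} makes the bound asymmetric in the two groups, which is fine and arguably cleaner, but note that the paper's main case does not need \cref{assumption1} at all --- it only invokes it in the singleton-bundle case --- so the paper's argument is slightly more self-contained on that front.
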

Finally, we prove that GSP-EFX is able to compensate at least $1/4$ of the total welfare loss generated by the application of the EFX fair division scheme. Formally,
\begin{restatable}{theorem}{bbEfx}
The GSP-EFX mechanism is 4-budget balance.
\label{thm:payment_compensation_EFX}
\end{restatable}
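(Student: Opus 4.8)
The plan is to follow the blueprint behind \cref{th:bb beta fair gsp} (the $2$-budget-balance of $\beta$-Fair GSP), tracking the single place where the \textsc{efx} allocation forces an extra factor of two. \emph{First}, I would use the pricing rule \cref{eq:payments} to reduce the statement to a claim about the GSP bid-value of the downgraded bidders. Only a bidder moved to a weakly worse slot is compensated, and for such a bidder $i$ with $a\defeq\nu^\textsc{g}(i)<\nu^\textsc{c}(i)$ the compensation equals $2\gamma_{g(i)}b_i\bigl(\alpha_{a,g(i)}-\alpha_{\nu^\textsc{c}(i),g(i)}\bigr)\le 2\gamma_{g(i)}b_i\alpha_{a,g(i)}$. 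Hence, writing $D$ for the set of downgraded bidders,
\[
\sum_{i\in\I}(p_i^\textsc{g}-p_i^\textsc{c})\ \le\ 2\sum_{i\in D}\gamma_{g(i)}\,b_i\,\alpha_{\nu^\textsc{g}(i),g(i)} .
\]
(Here \cref{assumption1} can be used to argue that group $\ell$ is not harmed in aggregate, so the bulk of $D$ lies in $I_h$, but the next step does not actually need this.)

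\emph{Second}, I would establish the GSP revenue estimate that already drives the $\beta$-Fair GSP proof. For every $j$, the greedy, decreasing-effective-bid structure of GSP together with the monotonicity of the click-through rates gives
\[
\gamma_{g(\pi(j-1))}\,p^\textsc{g}_{\pi(j-1)}=\gamma_{g(\pi(j))}\alpha_{j-1,g(\pi(j))}b_{\pi(j)}\ \ge\ \gamma_{g(\pi(j))}\alpha_{j,g(\pi(j))}b_{\pi(j)},
\]
i.e.\ the GSP price collected on slot $j-1$ dominates the GSP bid-value of the bidder sitting on slot $j$ (the $\gamma$'s are $\le 1$, so they only help when passing from quality-weighted prices to prices). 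Since distinct downgraded bidders occupy distinct GSP slots, I can charge each $i\in D$ with $\nu^\textsc{g}(i)\ge 2$ to the (distinct) GSP price of the bidder one slot above it, obtaining $\sum_{i\in D,\ \nu^\textsc{g}(i)\ge 2}\gamma_{g(i)}b_i\alpha_{\nu^\textsc{g}(i),g(i)}\le\sum_{i\in\I}p^\textsc{g}_i$, and therefore a compensation bounded by $2\sum_i p^\textsc{g}_i$. Equivalently, summing the displayed inequality over all $j$ and using \cref{assumption2} to identify $\pi(1)=I_h(1)$ yields $\sum_i p^\textsc{g}_i\ge\alg^\textsc{g}(\bv)-\gamma_h\alpha_{1,h}b_{I_h(1)}$. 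If the top GSP bidder $I_h(1)$ is not downgraded — which by construction always holds for $\beta$-Fair GSP, and is the typical case for GSP-EFX — this already gives $\sum_i(p^\textsc{g}_i-p^\textsc{c}_i)\le 2\sum_i p^\textsc{g}_i$.

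\emph{Third}, I have to handle the one term the charging above misses, and this is the main obstacle: unlike the round-robin scheme, GECE can swap slot $1$ away from group $h$, so $I_h(1)$ may be compensated for being moved from slot $1$ to the top slot of $J_h$, contributing at most $2\gamma_h\alpha_{1,h}b_{I_h(1)}$. Here I would invoke \cref{th:gece} (the GECE output is group $\beta$-EFX) together with the reason the swap occurred: when slot $1$ left $J_h$, group $h$ envied group $\ell$, so $\alg_h(J_\ell)>\alg_h(J_h)\ge\gamma_h\alpha_{1,h}b_{I_h(1)}$, which forces group $h$ to own at least two ``large'' bidders and, after the appropriate rearrangement over the slots of $J_\ell$ (all of which have index $\ge 2$), lets me conclude that the top bidder carries at most half of $\alg^\textsc{g}(\bv)$, i.e.\ $\gamma_h\alpha_{1,h}b_{I_h(1)}\le\alg^\textsc{g}(\bv)-\gamma_h\alpha_{1,h}b_{I_h(1)}\le\sum_i p^\textsc{g}_i$ by the Step‑2 estimate. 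Adding this missing term to the bound of Step 2 gives $\sum_i(p^\textsc{g}_i-p^\textsc{c}_i)\le 4\sum_i p^\textsc{g}_i$, as claimed. I expect the delicate points to be entirely concentrated in this last paragraph — pinning down exactly what a GECE swap of the top slot implies about the bid profile of group $h$, and turning the $\beta$-EFX inequality into the ``at most half of $\alg^\textsc{g}$'' bound; the rest is the same quality-weighted, charge-each-GSP-price-once bookkeeping as in the EF1 case.
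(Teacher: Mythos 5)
Your proposal is correct and follows essentially the same route as the paper's proof: bound the compensation by twice the bid-value loss, charge each non-top downgraded bidder's GSP value to the (distinct) GSP price one slot above for one factor of $\sum_i p_i^{\textsc{g}}$, and handle the possible GECE swap of slot $1$ separately via the envy condition at the moment of the swap for a second factor of $\sum_i p_i^{\textsc{g}}$. The only cosmetic differences are that the paper charges each downgraded bidder to the price of its within-group predecessor rather than the global predecessor, and phrases your Step~3 as $\alg_h(J_h^{\textsc{c}})\ge b_{I_h(1)}\gamma_h\alpha_{1,h}$ followed by $\alg_h(J_h^{\textsc{c}})\le\alg_h([m]\setminus\{1\})\le\sum_{j>1}p^{\textsc{g}}_{\pi^{\textsc{g}}(j-1)}$ — the same "swap implies group $h$'s slot-$1$-free bundle already covers the top bidder's value" argument you give.
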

\begin{remark}
Given a number of groups $k>2$ and $\beta=1$, \cref{thm:swl_ref} and \cref{thm:rel_GSP_EFX} can be extended to show that the group EF1 (resp., group EFX) mechanism achieves a value that is at least a 1/2k fraction (resp., a fraction 1/3k) of the value of GSP.
\end{remark}

\section{Price of Composition}\label{sec:poc}

Equipped with the results from~\cref{sec:efficiency and bb} we can study the performance of the proposed mechanisms at equilibrium. 
We study the quality of the equilibria emerging as the results of the no-regret learning dynamics in which each bidder behaves as an external-regret minimizer. To do so, we propose the \emph{price of composition} (PoC) as a natural measure to evaluate the social welfare guarantee of our mechanisms at equilibrium.
For an arbitrary mechanism, the social welfare attained for bids $\bv$, valuations $\vvec\in\V$, and quality factors $\bb{\gamma}$ is
\begin{equation}\label{eq:social welfare}
    SW(\bv,\vvec,\bb{\gamma})\defeq \sum_{j\in [m]}\alpha_{j,g(\pi(j))}\gamma_{g(\pi(j))}v_{\pi(j)}.
\end{equation}

\noindent
Given an equilibrium strategy $\sigma\in\bcce$ in an incomplete-information game, its social welfare is evaluated by comparing it to the expected ex-post social welfare of the GSP mechanism, which we denote by $\expe_{\vvec,\gammav}[SW^\textsc{g}(\vvec,\gammav)]$. In particular, we can define the following worst-case ratio.
\begin{definition}\label{def:poc}
The \emph{price of composition (PoC)} of a composite mechanism \textsc{c} is defined as 
\begin{equation*}
\poa\defeq\inf_{\cF,\cG,\sigma\in\bcce} \frac{\expe_{\vvec,\gammav,\bv\sim\sigma}[SW^\textsc{C}(\bv,\vvec,\gammav)]}{\expe_{\vvec,\bb{\gamma}}[\textsc{SW}^\textsc{G}(\vvec,\bb{\gamma})]}. 
\end{equation*}
\end{definition}
\noindent
By the definition of the composite mechanisms, their social-welfare is at most equal to the social welfare of the GSP mechanism (see~\cref{sec:efficiency and bb}), i.e., $\poa \in [0, 1]$.
Moreover, bounding the worst case $\poa$ automatically yields a $\poa/2$ guarantee on the \emph{price of total anarchy} \cite{blum2007learning}. This is because $SW^\textsc{g}(\vvec,\gammav)\ge  SW^\ast(\vvec,\gammav)/2$, where $SW^\ast$ is the optimal social welfare with valuations $\vvec$.~\footnote{The value of the GSP solution for the Ad-Types problem with group-specific CTRs is at least $1/2$ of the optimal solution provided by maximum weighted matching \cite{Colini-Baldeschi20}.}

In order to characterize the PoC of our mechanisms, we introduce a natural smoothness condition for composite mechanisms, which is a generalization of the notion of semi-smoothness by Lucier and Paes Leme~\shortcite{lucier2011gsp}.
Let $SW^\textsc{G}$ be the social welfare of the baseline mechanism (that is, in our setting,  GSP), and let $SW^\textsc{c}$ be the social welfare provided by the composite mechanism (in our case, $\beta$-Fair GSP or GSP-EFX) given an arbitrary vector or valuations. 
Then, our notion of smoothness states that social welfare gaps between the social welfare at the baseline mechanism with truthful bid vector $\vvec$, and the social welfare for an arbitrary joint strategy profile $\sigma\in\Sigma$ in the composite mechanism, can be captured by the marginal increases in the individual agents' utilities when unilaterally switching to a deviation strategy profile $\sigma'=(\sigma_1',\ldots,\sigma_n')$.
Formally, a composite mechanism \textsc{c} is \emph{$(\lambda,\mu)$-semi-smooth with respect to a baseline mechanism \textsc{g}} if there exists a profile of individual bidding strategies $\sigma'=(\sigma_1',\ldots,\sigma_n')\in\bigtimes\Sigma_i$ such that, for any joint strategy $\sigma\in\Sigma$, $\vvec\in\V$, and $\gammav$, 
\begin{multline}\label{def:smoothness2}
    \displaystyle
    \expe_{\substack{\bv\sim\sigma[\vvec]\\\bv'\sim\sigma'[\vv]}} \big[\sum_{i\in [n]} u^\textsc{c}_i((b_i',\bv_{-i}), v_i,\gammav)\big]\\\ge \lambda SW^\textsc{g}(\vvec,\gammav)-\mu \expe_{\bv\sim\sigma[\vvec]}\mleft[SW^\textsc{c}(\bv,\vvec,\gammav)\mright].
\end{multline}
In this setting, we make the additional assumption that bidders are conservative and, therefore, they do not overbid. This is in line with previous works studying the price of anarchy of auctions with conservative bidders~(see, e.g., \cite{PT10,CKS16,Feldman2013SimultaneousAA}).~\footnote{
Our bounds on PoC can be adapted with minor modifications to the case in which bidders are $\delta$-conservative~\cite{BR11}, i.e., $b_i\leq \delta v_i, \forall i$. In particular, all price of composition factors are multiplied by an additional $\delta$ factor.}  
Under this assumption, the following holds.

\begin{restatable}{lemma}{smooth}\label{lemma:semismooth}
	The $\beta$-Fair GSP mechanism is $\mleft(1/2,(1+\beta) \mright)$-semi-smooth, and the GSP-EFX mechanism is  $\mleft(1/2, 3 \mright)$-semi-smooth. Both mechanisms are individually rational at the equilibrium. 
\end{restatable}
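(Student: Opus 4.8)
The plan is to verify the semi-smoothness inequality \eqref{def:smoothness2} by choosing, as the deviation profile $\sigma'$, the strategy in which every bidder $i$ bids truthfully, i.e. $\sigma_i'[v_i]$ puts all mass on the bid $b_i' = v_i$ (this is legitimate since $\bar B_i \ge \sup \V_i$). With this choice the right-hand side target becomes $\tfrac12 SW^\textsc{g}(\vv,\gammav) - \mu\,\expe[SW^\textsc{c}(\bv,\vv,\gammav)]$, so the task reduces to lower bounding $\sum_i u_i^\textsc{c}((v_i,\bv_{-i}),v_i,\gammav)$ for an arbitrary competing profile $\bv_{-i}$. First I would recall the standard GSP single-deviation fact (as used by \citet{lucier2011gsp,CKK15}): when bidder $i$ deviates to its true value $v_i$ in the GSP stage, it is guaranteed a slot whose effective value is at least half of the value of whatever slot $\pi(\nu(i))$ it ``should'' get in the welfare-optimal assignment against $\bv_{-i}$, while paying at most the next bidder's effective bid; quantitatively, $u_i^\textsc{g}((v_i,\bv_{-i}),v_i,\gammav) \ge \tfrac12 \alpha_{\nu^\ast(i),g(i)}\gamma_{g(i)} v_i - (\text{price term bounded by the competitors' effective bids on that slot})$. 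Summing over $i$ and using that each slot's price is charged against at most one competitor's effective bid gives the clean bound $\sum_i u_i^\textsc{g} \ge \tfrac12 SW^\textsc{g}(\vv,\gammav) - SW^\textsc{g}(\bv,\vv,\gammav)$, which is exactly the $(1/2,1)$-semi-smoothness of plain GSP in the ad-types model; here I would invoke the adaptation of the \citet{CKK15} argument to group-specific click-through rates that the paper says it extends.

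Next I would pass from GSP to the composite mechanism. The key observation is the pricing rule \eqref{eq:payments}: whenever the fair division layer moves bidder $i$ to a weakly better slot the payment is unchanged, and whenever it moves $i$ to a worse slot the payment drops by $2 b_i\gamma_{g(i)}(\alpha_{\nu^\textsc{g}(i),g(i)}-\alpha_{\nu^\textsc{c}(i),g(i)})$, which over-compensates the lost value $b_i\gamma_{g(i)}(\alpha_{\nu^\textsc{g}(i),g(i)}-\alpha_{\nu^\textsc{c}(i),g(i)})$ by a factor $2$. Consequently, for the deviating bidder $i$ (who bids $v_i$, so its bid equals its value in the relevant terms), $u_i^\textsc{c}((v_i,\bv_{-i}),v_i,\gammav) \ge u_i^\textsc{g}((v_i,\bv_{-i}),v_i,\gammav)$: a relocation to a better slot only increases value and leaves the price fixed, while a relocation to a worse slot loses one unit of value but gains two units from the payment reduction. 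Hence $\sum_i u_i^\textsc{c}((v_i,\bv_{-i}),v_i,\gammav) \ge \sum_i u_i^\textsc{g}((v_i,\bv_{-i}),v_i,\gammav) \ge \tfrac12 SW^\textsc{g}(\vv,\gammav) - SW^\textsc{g}(\bv,\vv,\gammav)$. Taking expectations over $\bv \sim \sigma[\vv]$ and over the (degenerate) randomness of $\sigma'$ preserves the inequality.

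Finally I would convert $SW^\textsc{g}(\bv,\vv,\gammav)$ on the right into $SW^\textsc{c}(\bv,\vv,\gammav)$ using the efficiency theorems of \cref{sec:efficiency and bb}. By \cref{thm:swl_ref} the $\beta$-Fair GSP mechanism satisfies $\alg^\textsc{c}(\bv)\ge \tfrac{1}{1+\beta}\alg^\textsc{g}(\bv)$; the identical argument applied with valuations in place of bids (the bound only uses monotonicity of the click-through rates and the slot-shift structure of the round-robin layer, not the bid/value distinction) gives $SW^\textsc{c}(\bv,\vv,\gammav)\ge \tfrac{1}{1+\beta} SW^\textsc{g}(\bv,\vv,\gammav)$, hence $SW^\textsc{g}(\bv,\vv,\gammav)\le (1+\beta) SW^\textsc{c}(\bv,\vv,\gammav)$. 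Substituting yields $\expe[\sum_i u_i^\textsc{c}] \ge \tfrac12 SW^\textsc{g}(\vv,\gammav) - (1+\beta)\,\expe[SW^\textsc{c}(\bv,\vv,\gammav)]$, i.e. $(\tfrac12, 1+\beta)$-semi-smoothness. For GSP-EFX the same chain goes through verbatim, now invoking \cref{thm:rel_GSP_EFX} (factor $1/3$, so $SW^\textsc{g}\le 3\,SW^\textsc{c}$) to obtain $(\tfrac12, 3)$-semi-smoothness.

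I expect the main obstacle to be the first step — establishing the $(1/2,1)$-type single-deviation bound for GSP under \emph{group-specific} click-through rates and quality factors, since the usual GSP smoothness proof assumes a single click-through-rate vector and one must be careful that the deviating bidder, bidding its true value, still secures a slot of comparable effective value when the competitors' slots are ranked by a different (group-dependent) $\alpha$; Assumptions \ref{assumption1} and \ref{assumption2} and the monotonicity $\alpha_{1,g}\ge\cdots\ge\alpha_{m,g}$ are presumably what make this work, and I would need to track the price term carefully so that the competitors' effective bids are each counted at most once across the sum. The second and third steps are essentially bookkeeping against the already-proven pricing and efficiency lemmas.
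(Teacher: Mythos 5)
There is a genuine gap in your first step, and it is precisely the point where the paper's proof does real work. You take the deviation to be truthful bidding, $b_i'=v_i$, and invoke a ``standard GSP single-deviation fact'' of the form $u_i^\textsc{g}((v_i,\bv_{-i}),v_i,\gammav)\ge \tfrac12\alpha_{j^\ast,g(i)}\gamma_{g(i)}v_i-(\text{competitor's effective bid on that slot})$, with the price terms charged once each against the bid-welfare. No such fact holds for a truthful deviation: in the case where bidding $v_i$ wins a slot at least as good as the target slot $j^\ast$, the GSP price can consume essentially the entire value (the next effective bid can sit just below $i$'s own), so $u_i^\textsc{g}$ can be arbitrarily close to $0$ while $\tfrac12\alpha_{j^\ast,g(i)}\gamma_{g(i)}v_i$ is large and the occupant of slot $j^\ast$ under $\bv$ has a small bid --- e.g.\ two slots with $\alpha_1=1,\alpha_2=0.9$, a competitor bidding $0.99$, a low bidder at $0.05$, and $v_i=1$ with target slot $2$: the deviation wins slot $1$ with utility $0.01$, while your claimed lower bound evaluated against the slot-$2$ occupant is about $0.4$. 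This is exactly why the semi-smoothness of GSP (\citet{lucier2011gsp}, and the paper's own proof) uses the shaded deviation $b_i'=v_i/2$: when the deviator wins a good slot its price is at most its own effective bid, i.e.\ at most half the slot's effective value, so no price term needs to be charged at all; and when it loses the slot, the displacing competitor's effective bid exceeds $\tfrac12\alpha_{j^\ast,g(i)}\gamma_{g(i)}v_i$ and is charged to $\alg^\textsc{g}(\bv)$ once per slot. Moreover, the choice $b_i'=v_i/2$ is not just convenient for GSP: in the paper's proof it interacts exactly with the factor $2$ in the pricing rule \eqref{eq:payments}, since $2b_i'\gamma_{g(i)}(\alpha_{\nu^\textsc{g}(i),g(i)}-\alpha_{\nu^\textsc{c}(i),g(i)})=v_i\gamma_{g(i)}(\alpha_{\nu^\textsc{g}(i),g(i)}-\alpha_{\nu^\textsc{c}(i),g(i)})$, making the compensation match the deviator's true value loss and giving $u_i^\textsc{c}\ge u_i^\textsc{g}$ under the deviation. (Your observation that truthful bidding over-compensates and hence also yields $u_i^\textsc{c}\ge u_i^\textsc{g}$ is correct, but it does not rescue the missing GSP bound.)

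A secondary divergence: in your last step you subtract $SW^\textsc{g}(\bv,\vv,\gammav)$ and propose to re-prove \cref{thm:swl_ref} ``with valuations in place of bids.'' The paper instead obtains the bound with the bid-welfare $\alg^\textsc{g}(\bv)$ on the right, applies \cref{thm:swl_ref} as stated ($\alg^\textsc{g}(\bv)\le(1+\beta)\alg^\textsc{c}(\bv)$), and then uses no-overbidding to pass from $\alg^\textsc{c}(\bv)$ to $SW^\textsc{c}(\bv,\vv,\gammav)$; your valuation-based variant of \cref{thm:swl_ref} is not proved in the paper (the round-robin and GECE reallocations are driven by the bid ordering), so you would owe an additional argument there as well. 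The GSP-EFX case inherits both issues.
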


\noindent By the fact that $(\lambda, \mu)$-semi smoothness implies a $\frac{\lambda}{1+\mu}$ $\poa$, we characterize the $\poa$ of the mechanisms as follows.
\begin{restatable}{theorem}{pocOne}\label{thm:poa beta fgsp}
The price of composition with uncertainty of $\beta$-Fair GSP is 
$
\poa= (2(2+\beta))^{-1}.
$
\end{restatable}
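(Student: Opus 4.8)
The plan is to derive \cref{thm:poa beta fgsp} as a direct corollary of \cref{lemma:semismooth} together with the standard machinery connecting semi-smoothness to price of anarchy/composition bounds in Bayesian settings, and then to argue tightness with a matching lower-bound instance.

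First I would establish the upper bound $\poa \ge 1/2(2+\beta)$. By \cref{lemma:semismooth}, the $\beta$-Fair GSP mechanism is $(1/2, 1+\beta)$-semi-smooth with respect to GSP. The key step is to invoke the (by now routine) argument of~\citet{lucier2011gsp}, suitably adapted to composite mechanisms and to our notion of semi-smoothness in \cref{def:smoothness2}: fix any BCCE $\sigma\in\bcce$ and any prior $(\F,\G)$. For each bidder $i$ and each type $v_i$, the BCCE inequality in \cref{def:bayesian cce} says that deviating to $\sigma_i'[v_i]$ cannot increase $i$'s expected utility. Summing these inequalities over all bidders, taking expectations over $\vv\sim\F$ and $\gammav\sim\G$, and then applying the semi-smoothness inequality \cref{def:smoothness2} pointwise in $(\vv,\gammav)$ yields
\[
\expe_{\vv,\gammav,\bv\sim\sigma}\mleft[\sum_i u_i^\textsc{c}(\bv,\vv,\gammav)\mright] \ge \lambda\,\expe_{\vv,\gammav}[SW^\textsc{g}(\vv,\gammav)] - \mu\,\expe_{\vv,\gammav,\bv\sim\sigma}[SW^\textsc{c}(\bv,\vv,\gammav)].
\]
The left-hand side is the total expected utility at $\sigma$, which by individual rationality (payments are defined precisely so that no agent is penalized — see \cref{eq:payments}) is at most the expected social welfare $\expe[SW^\textsc{c}(\bv,\vv,\gammav)]$ of the composite mechanism at $\sigma$. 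Combining and rearranging gives $\expe[SW^\textsc{c}] \ge \frac{\lambda}{1+\mu}\expe[SW^\textsc{g}]$, i.e. $\poa \ge \frac{\lambda}{1+\mu} = \frac{1/2}{1+(1+\beta)} = \frac{1}{2(2+\beta)}$. This holds for every $(\F,\G,\sigma)$, hence for the infimum in \cref{def:poc}.

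Second, for the matching lower bound $\poa \le 1/2(2+\beta)$ I would exhibit a family of instances on which a BCCE attains social welfare ratio arbitrarily close to $1/2(2+\beta)$. The natural candidate combines the two sources of loss that the constant $2+\beta$ encodes: the factor $(1+\beta)$ loss from the fair-division re-allocation (the shift of majority advertisers established after \cref{def:ef1}, which is tight by \cref{thm:swl_ref}), and the factor-$2$ loss inherent to GSP equilibria relative to the optimal matching (the footnote bound $SW^\textsc{g}\ge SW^\ast/2$, tight for classical GSP PoA lower bounds such as in~\citet{lucier2011gsp}). Concretely I would take a deterministic prior (so $\F$ is a point mass) with a small number of bidders and slots, choose click-through rates and quality factors so that the $\beta$-Fair round-robin forces the high-value majority bidder down by the worst-case $(1+\beta)$ factor, and plant a pure Nash (hence BCCE) bid profile in which low-value bidders are placed in the top slots — the usual GSP bad equilibrium — so that the two multiplicative losses compound. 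Verifying that this profile is indeed a BCCE of the composite game (checking the deviation inequality of \cref{def:bayesian cce} bid-by-bid, using the modified payments \cref{eq:payments}) is the bookkeeping part.

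The main obstacle I expect is the lower-bound construction: the upper bound is essentially a turnkey application of semi-smoothness once \cref{lemma:semismooth} is in hand, but producing a single instance whose BCCE simultaneously realizes both the worst-case fair-division shift and the worst-case GSP misallocation — and proving that the candidate profile is a genuine equilibrium of the \emph{composite} mechanism under the compensated payments rather than of plain GSP — requires care, since the payment adjustment in \cref{eq:payments} changes bidders' incentives and could, in principle, destabilize the naive bad equilibrium. A secondary subtlety is confirming that the semi-smoothness-to-PoC reduction goes through verbatim for \emph{coarse} correlated equilibria in the Bayesian model (as opposed to correlated or Nash equilibria); this is exactly the regime handled by~\citet{lucier2011gsp} and~\citet{CKK15}, so I would cite those and check only that our \cref{def:smoothness2} is strong enough to feed their argument, which it is because the deviation profile $\sigma'$ is required to be a product (unilateral) deviation.
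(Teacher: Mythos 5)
Your upper-bound argument (invoke \cref{lemma:semismooth} with $\lambda=1/2$, $\mu=1+\beta$, sum the BCCE deviation inequalities of \cref{def:bayesian cce}, bound total utility by $\expe[SW^\textsc{c}]$, and rearrange to get $\expe[SW^\textsc{c}]\ge\frac{\lambda}{1+\mu}\expe[SW^\textsc{g}]=\frac{1}{2(2+\beta)}\expe[SW^\textsc{g}]$) is exactly the paper's proof, so on that part you are correct and essentially identical in approach. The matching lower-bound instance you sketch is not actually constructed in your proposal, but note that the paper does not construct one either --- despite the ``$=$'' in the statement, its proof establishes only the $\ge\frac{1}{2(2+\beta)}$ direction --- so no gap relative to the paper's own argument.
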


\noindent This result shows that, even if the bidders are self-interested (i.e., they take decisions so as to minimize their own individual regret), our composite mechanism can guarantee \emph{group fairness}, as well as convergence to a \emph{good} equilibrium point. %
Analogously, it is possible to prove the following for GSP-EFX:
\begin{restatable}{theorem}{pocTwo}\label{thm:poa_EFX_GSP} The price of composition with uncertainty of GSP-EFX is 
$\poa= 1/8$.
\end{restatable}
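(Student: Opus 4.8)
The plan is to derive \cref{thm:poa_EFX_GSP} as an immediate consequence of the semi-smoothness of GSP-EFX established in \cref{lemma:semismooth}, together with the standard "smoothness implies price of anarchy" machinery adapted to the Bayesian coarse correlated equilibrium setting and to our composite social-welfare comparison. Concretely, \cref{lemma:semismooth} gives that GSP-EFX is $(1/2, 3)$-semi-smooth with respect to GSP, meaning that there is a fixed profile of individual deviation strategies $\sigma'=(\sigma_1',\ldots,\sigma_n')$ such that \eqref{def:smoothness2} holds pointwise in $\vv$ and $\gammav$ with $\lambda = 1/2$ and $\mu = 3$. The goal is to show this forces $\poa = 1/(1+\mu)\cdot\lambda \cdot \text{(something)}$ — more precisely $\poa = \lambda/(1+\mu) = (1/2)/4 = 1/8$.

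First I would fix an arbitrary pair of distributions $(\F,\G)$ and an arbitrary BCCE $\sigma\in\bcce$, and the target is to lower bound $\expe_{\vv,\gammav,\bv\sim\sigma}[SW^\textsc{c}(\bv,\vv,\gammav)]$ by $\tfrac{1}{8}\,\expe_{\vv,\gammav}[SW^\textsc{g}(\vv,\gammav)]$. The key step is to apply the BCCE inequality from \cref{def:bayesian cce} for each bidder $i$ and each type $v_i$, using as the deviation bid the (randomized) bid drawn from $\sigma_i'[v_i]$; since the BCCE condition holds for every deterministic deviation $b_i'$, it also holds in expectation over $b_i'\sim\sigma_i'[v_i]$. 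Summing over all bidders $i$ and taking expectations over $v_i\sim$ its marginal and $\vv_{-i}\sim\F_{v_i}$ — i.e. over $\vv\sim\F$ — and over $\gammav\sim\G$, the left-hand side telescopes so that
\[
\expe_{\vv,\gammav,\bv\sim\sigma}\mleft[\sum_{i\in[n]} u_i^\textsc{c}(\bv,\vv,\gammav)\mright]\;\ge\;\expe_{\vv,\gammav,\bv\sim\sigma}\mleft[\sum_{i\in[n]} u_i^\textsc{c}((b_i',\bv_{-i}),\vv,\gammav)\mright].
\]
Here I would be careful that the conditional-independence structure of a joint strategy $\sigma\in\Sigma$ (its $\vv$-marginal equals $\F$, and, conditioned on $\vv$, $\bv$ is the relevant marginal) is exactly what lets the per-type BCCE inequalities aggregate into this population-level statement; this is the analogue of the argument in \cite{CKK15,hartline2015no}. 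Then I apply \eqref{def:smoothness2} to the right-hand side with the deviation profile $\sigma'$ from \cref{lemma:semismooth} and $(\lambda,\mu)=(1/2,3)$, obtaining a lower bound of $\tfrac12 SW^\textsc{g}(\vv,\gammav) - 3\,\expe_{\bv\sim\sigma[\vv]}[SW^\textsc{c}(\bv,\vv,\gammav)]$, again pointwise in $(\vv,\gammav)$ and hence in expectation.

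Next I would bound the left-hand side from above by the composite social welfare: since GSP-EFX is individually rational (payments are set so no advertiser's utility drops below what GSP gave them, and GSP is IR — see the pricing rule \eqref{eq:payments} and the IR discussion), we have $\sum_i u_i^\textsc{c}(\bv,\vv,\gammav) = SW^\textsc{c}(\bv,\vv,\gammav) - \sum_i p_i^\textsc{c}(\bv,\vv,\gammav) \le SW^\textsc{c}(\bv,\vv,\gammav)$ because prices are nonnegative. Combining the three displayed inequalities and writing $S^\textsc{c}\defeq \expe_{\vv,\gammav,\bv\sim\sigma}[SW^\textsc{c}]$ and $S^\textsc{g}\defeq\expe_{\vv,\gammav}[SW^\textsc{g}]$, I get $S^\textsc{c}\ge \tfrac12 S^\textsc{g} - 3 S^\textsc{c}$, i.e. $4 S^\textsc{c}\ge \tfrac12 S^\textsc{g}$, i.e. $S^\textsc{c}/S^\textsc{g}\ge 1/8$. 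Taking the infimum over $(\F,\G,\sigma)$ yields $\poa\ge 1/8$. For the matching upper bound $\poa\le 1/8$ I would exhibit (or cite from the proof of \cref{thm:rel_GSP_EFX} / \cref{lemma:semismooth}, where the factor $3$ is shown tight) an instance and a BCCE — in fact even a pure Nash-type bid profile suffices — on which the composite welfare is (arbitrarily close to) $1/8$ of the expected GSP welfare, so that the worst-case ratio is exactly $1/8$.

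The main obstacle I anticipate is bookkeeping the Bayesian aggregation cleanly: making sure the per-type BCCE deviation inequality, when summed over $i$ and integrated against the correct posterior $\F_{v_i}$ and the correlated prior $\F$, produces exactly a sum of $u_i^\textsc{c}$ terms that matches the left side of the semi-smoothness inequality \eqref{def:smoothness2} — in particular that the deviation strategies $\sigma_i'$ in \eqref{def:smoothness2} are individual (type-to-bid) maps, which is precisely why they are legal deviations in \cref{def:bayesian cce}, and that no bidder's deviation is allowed to depend on $\gammav$ (consistent with the model, since quality factors are unobserved). Everything else — the IR step and the final arithmetic $4x\ge\tfrac12 y$ — is routine. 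If a matching lower-bound instance is not readily available, I would instead state the theorem as the bound $\poa\ge 1/8$ and note that the constant inherited from the $(1/2,3)$-semi-smoothness of \cref{lemma:semismooth} is the best obtainable by this technique; but given the paper writes "$\poa = 1/8$", I expect the tight instance to be the one already implicit in the analysis behind \cref{thm:rel_GSP_EFX}.
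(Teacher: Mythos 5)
Your proposal matches the paper's argument: the paper proves this theorem exactly by invoking the $(1/2,3)$-semi-smoothness of GSP-EFX from \cref{lemma:semismooth} and repeating the BCCE-aggregation argument used for \cref{thm:poa beta fgsp}, yielding $\expe[SW^\textsc{c}]\ge\frac{1}{2}\expe[SW^\textsc{g}]-3\,\expe[SW^\textsc{c}]$ and hence the $1/8$ bound, which is precisely your derivation. Your remark that a matching tight instance would be needed for the equality (and that otherwise only $\poa\ge 1/8$ is established) is a fair observation, but it does not change the fact that your route is essentially identical to the paper's.
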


\begin{figure}[ht]
\centering
\hspace{-.4cm}
\begin{minipage}{.23\textwidth}
\includegraphics[scale=0.3]{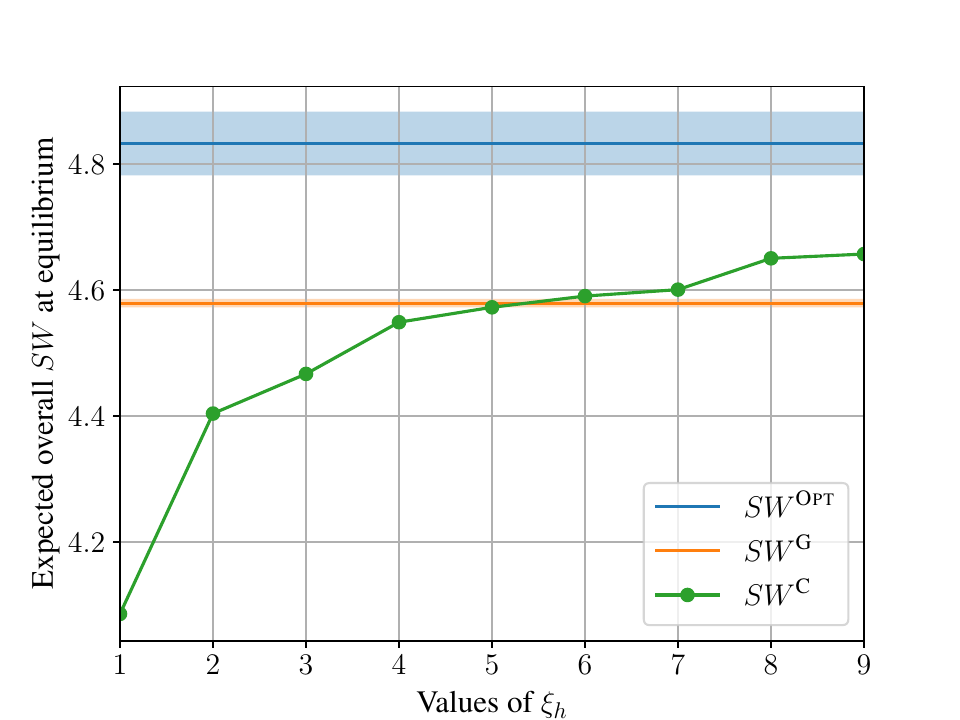}
\end{minipage}
\hspace{.2cm}
\begin{minipage}{.23\textwidth}
\includegraphics[scale=0.3]{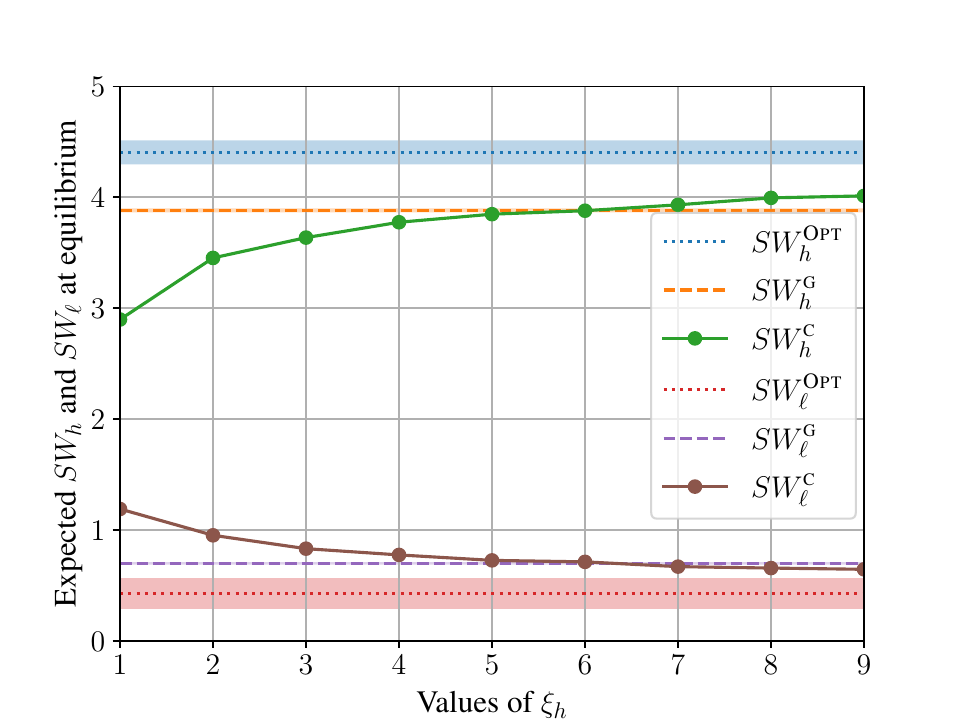}
\end{minipage}

\begin{minipage}{.3\textwidth}
\includegraphics[scale=0.32]{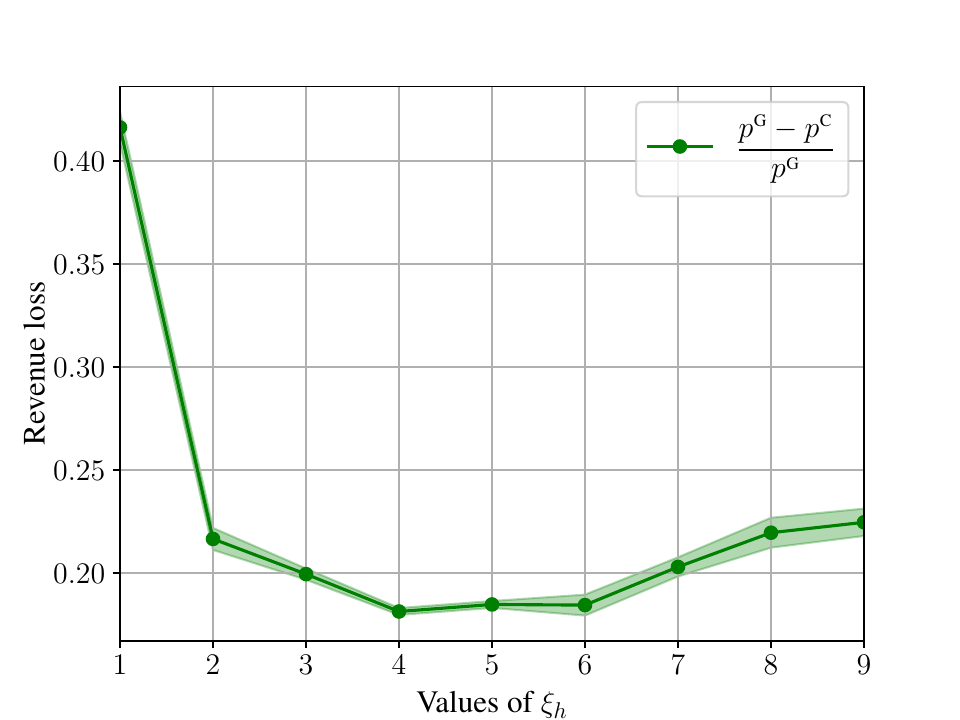}
\end{minipage}
\caption{\textbf{(Top-Left):} Expected overall SW at equilibrium. \textbf{(Top-Right):} expected per-group SW at equilibrium. \textbf{(Bottom):} expected fraction of the GSP payments to be used for compensation in the composite mechanism. We set $\xi_\ell=1$ and we let $\xi_h$ vary.}
\label{fig:exp}
\end{figure}

\section{Experimental Evaluation}\label{sec:exp}

We experimentally evaluate the quality of the equilibria emerging as the results of no-regret learning dynamics in which agents interact through the $\beta$-Fair GSP mechanism.

\paragraph{Regret minimization for the Bayesian setting}
For each bidder $i$, we consider a discrete set of bids $\B_i$. 
We focus on the partial-information setting, in which, at each time instant $t$, each bidder observes only the reward $u_i^t(b_i^t)$ associated to its choice $b_i^t\in\B_i$. This is in line with what happens in real-world sponsored search auctions, where advertisers do not observe competing bids (i.e., they cannot compute a counterfactual utility $u_i^t(b)$ for each $b\in\B_i$), but can only observe the outcome associated to their decision $b_i^t$. 
We instantiate an external-regret minimizer $\mathcal{R}_{i,v}$ for each bidder $i$ and $v\in\V_i$. In practice, we use the \textsc{Exp3} algorithm by~\cite{auer2002nonstochastic} for each external regret minimizer. Then, we build a regret minimizer for the Bayesian bidder $i$ as follows: at each $t$, bidder $i$ observes their realized type $v^t\in\V_i$ and selects $b_i^t\in\B_i$ according to $\mathcal{R}_{i,v_i}$. Then, after utility $u_i^t(b_i^t)$ is observed, only $\mathcal{R}_{i,v_i}$ is updated. This simple procedure guarantees that $\limsup_{T\to\infty} R_i^T/T\le 0$, which implies that the empirical frequency of play $\bar\sigma^T$ of the dynamic converges almost surely in the limit to a Bayesian coarse correlated equilibrium (see, e.g.,~\cite[Lemma 10]{hartline2015no}).

\paragraph{Experimental setting}
We construct a real-world dataset through logs of a large Internet advertising company. 
We test our $\beta$-Fair GSP mechanism in an artificial environment where we have 20 advertisers, equally distributed among two groups, and competing for ad opportunities over a sequence of  $T=10^4$ auctions.
For each $i$, we set $\V_i=\{x/100: x\in[100]\cup\{0\}\}$. Moreover, in order to be able to create a steep unbalance between the two groups, we let, for each $\vvec\in\V$, $\cF(\vvec)=\cF_1(v_1)\cdot\ldots\cdot\cF_n(v_n)$, with $\cF_i\in\Delta_{\V_i}$ for each $i$. Then, for each $i\in I_h$ (i.e., advertiser $i$ belongs to the majority group), we artificially set value distributions to be such that $\cF_i(1)=1$, and $\cF_i(v)=0$ for each $v\in\V_i\setminus\{1\}$. Each bidder $i\in I_\ell$ has a value distribution $\cF_i$ built by normalizing the distribution of bids observed from real-world bidding data of a large Internet advertising company. \add{We use bids as a proxy for true valuations of advertisers.}
Discount curves are computed by averaging and normalizing in $[0,1]$ real-world discount factors. In particular, we estimate discount curves on ads optimizing for two distinct conversion types, one per group. This models different per-group preferences on the slots. Quality factors are set to be $\gammav=(1,1)$. Experiments are run on a 24-core machine with 57Gb of RAM.

\paragraph{Results}
Each advertiser takes decisions so as to minimize their external regret according to the procedure described above.
Then, we analyse the equilibrium outcomes originating from advertisers behaving so as to minimize their regret, and interacting through GSP and $\beta$-Fair GSP, respectively. To do so, we compute the empirical frequency of play $\bar\sigma^T$ in the two settings. Let $\bar\sigma^{T,\textsc{g}}$ and $\bar\sigma^{T,\textsc{c}}$ be the empirical frequency of play obtained via the GSP mechanism, and the empirical frequency of play obtained via the $\beta$-Fair GSP mechanism, respectively.
\cref{fig:exp}--(Top-Left), and \cref{fig:exp}--(Top-Right) report a comparison between the expected social welfare attained at $\bar\sigma^{T,\textsc{g}}$ (i.e., $SW^\textsc{g}$), the expected social welfare attained at $\bar\sigma^{T,\textsc{c}}$ (i.e., $SW^\textsc{c}$) for different values of $\xi_h$, while for simplicity we keep $\xi_\ell=1$ (see~\cref{def:ef1}), and the expected social welfare obtained when advertisers are not strategic and submit bids truthfully to a GSP mechanism (i.e., $SW^{\textsc{Opt}}$). Each value is computed over $20$ repetitions of the dynamics, and figures display the resulting mean and standard deviation. 
In particular, \cref{fig:exp}--(Top-Left) reports the overall social welfare, and shows that, as expected, lower values of $k$ increase $\Delta SW$, that is, the gap between the social welfare provided by the GSP solution and the social welfare provided by the $\beta$-Fair GSP solution. The gap is negligible for most values of $\beta=1/\xi_h$, and in the worst case, the empirical gap is approximately $10\%$ of the social welfare provided by GSP. 
\cref{fig:exp}--(Top-Right) reports the per-group social welfare at equilibrium. 
Finally, \cref{fig:exp}--(Bottom) describes the fraction of the revenue which is lost due to compensation for advertisers as defined in~\cref{eq:payments}.
We observe that, in practice, the fraction of the GSP prices used for compensation significantly lower than the worst case bound of~\cref{th:bb beta fair gsp}. In particular, the composite mechanism looses nearly $40\%$ of GSP revenue when $\xi_h=1$ (i.e., when the group fairness constraint is as tight as possible). However, for higher values of $\xi_h$, the fraction of $p^\textsc{g}$ which has to be used for compensations stabilizes around $20\%$. This suggests that there may be a trade-off between guarantees for advertisers and revenue losses incurred by the platform where this type of mechanisms could be viable in practice.

\section{Discussion and Future Works}

We proposed a class of composite mechanisms constructed through the combination of GSP with different fair division schemes. We show that these composite mechanisms yield group fairness guarantees for advertisers, and we characterized the costs which the platform has to incur. 
In the future, it would be interesting to study fairness guarantees in the space of valuations, and in the space of utilities. We already provided some preliminary results going in this direction (\cref{thm:ef1 val}). Extending this to EFX presents non-trivial additional challenges, such as showing that there exists a monotone version of GECE. We leave this as a future research direction.

\section*{Acknowledgements}
The Sapienza team was supported by the ERC Advanced 
Grant 788893 AMDROMA ``Algorithmic and Mechanism Design Research in 
Online Markets'',   the MIUR PRIN project ALGADIMAR ``Algorithms, Games, and Digital Markets'', and the Meta Research grant on ``Fairness and Mechanism Design''. 

\bibliographystyle{named}
\bibliography{refs}

\clearpage
\appendix
\onecolumn
\begin{center}
    \fbox{\LARGE{\bfseries ~Supplementary Material}}
\end{center}
\vspace{1cm}

\section{Related Works}\label{sec: related works}

The process of skewing ads delivery on the basis of particular demographic characteristics has recently been studied in several works \cite{ali2019discrimination,ali2019ad,lambrecht2018algorithmic}.  \cite{gelauff2020advertising} study the problem of advertising for demographically representative outcomes, with the goal of seeking fairness in the realized conversions generated by the advertising campaigns. \cite{nasr2020bidding} present bidding strategies that advertisers can use to avoid gender discrimination at the user level.  
 Moreover, a recent line of work studies how to design advertising mechanisms that ensure diversity and fairness between users. In this context, individual fairness \cite{dwork2012fairness} and group fairness \cite{FFMSV15} are the most widely adopted paradigms.~\cite{ilvento2020multi} propose a model of fairness that combines the complementary notions of envy-freeness and individual fairness, with the goal of maximizing the platform utility defined as the sum of the advertiser's bids. Moreover, \cite{chawla2020fairness} study the trade-offs between social welfare maximization and fairness in the context of ad auctions, while~\cite{DworkI19} address the problem of designing truthful ad auctions that satisfy individual fairness. The work by~\cite{DworkIJ20} investigates individual fairness under pipeline composition, motivated by the fact that a system built from individually fair components may not itself be individually fair. Finally, \cite{celis2019toward} propose an optimal Bayesian auction to maximize the platform's revenue conditioned on ensuring that the audience seeing an advertiser's ad is distributed appropriately across sensitive types such as gender or race.

Fairness in online learning for ad auctions has also been studied within the the contextual stochastic multi-armed bandit framework.  Specifically, there is a set of arms/advertisers described by latent vectors which is drawn from an unknown distribution. At each round a new user is drawn from the distributions of the contexts of two groups, a majority group and a minority group. \cite{NIPS2016_6355} study the problem of designing a learning algorithm that ensures individual fairness between the arms/advertisers. Moreover, \cite{pmlr-v75-raghavan18a} analyse the problem of designing a bandit algorithm that achieves a regret for each group that is as close as possible to the regret experienced when the group plays in isolation under the same algorithm.

Envy-freeness has been extensively studied in fair division \cite{foley1967,Varian} and in online advertising \cite{colini2014revenue,colini2016revenue,DBLP:journals/corr/BranzeiFMZ16,chen2010envy,guruswami2005profit}.
Unfortunately, envy-freeness does not always admit a solution for indivisible items and, therefore, various relaxations of this concept have been proposed in literature.
The notion of envy-freeness up to one good (EF1) has been introduced by~\cite{Budish11},
and in the work by~\cite{lipton2004approximately}.~\cite{Budish11} also defines the notion of maximin share allocation (MMS), based on concepts introduced by \cite{Moulin90}. Then,~\cite{CaragiannisKMPS19} introduced the notions of envy freeness up to any good (EFX) and pairwise maximin share allocation (PMMS). More recently,~\cite{BMV18}  proposed to study groupwise maximin share fairness (GMMS) allocations.  
The notion of group EF1 allocation has recently been studied in \cite{kyropoulou2020almost}, whereas the notion of approximation in several fair division schemes has been explored in \cite{AMN20}. Fair division problems have been studied in many different settings: 
from apartment renting~\cite{gal2017fairest} to allocating blood donations to blood banks~\cite{mcelfresh2020matching}, and in other applications ~\cite{brams1996procedure,ghodsi2011dominant,kash2014no,kroer2019scalable,aleksandrov2015online}.

The price of anarchy \cite{KP09} of GSP equilibria have been studied in various works (see, e.g.,~\cite{lucier2011gsp,roughgarden2015intrinsic,CKK15}). The price of anarchy of an auction is traditionally measured through the overall social welfare attained at the equilibrium (see, e.g., reference \cite{CKK15} and related work), as it is the price of composition that we introduce in this work. However, prices are not completely ignored in our work, as it is the case with standard GSP auctions. In our setting monetary incentives are crucial to  select equilibria with good price of composition.

Finally, given that no-external regret learning dynamics converge to the set of coarse correlated equilibria (see~\cite{blum2007learning,cesa2006prediction}),~\cite{blum2008regret} study the price of \emph{total} anarchy in games in which players take decisions so as to minimize their external regret. 
No-regret learning dynamics converging to correlated equilibria have been studied in various settings (see, e.g., \cite{foster1997calibrated,hart2000simple,celli2019learning,celli2020noregret,farina2021simple}).
Moreover, \cite{hartline2015no} and~\cite{CKK15} study the quality of outcomes emerging from no-regret dynamics in Bayesian settings.
    
\iftrue
\section{Summary of the Notation}

\cref{tab:notation} summarizes the main notation used throughout the paper. 

\begin{table}[htp]
\centering
    \begin{tabular}{p{1.5cm}p{15cm}}
    \toprule
        \centering\textbf{Symbol} & \textbf{Description}\\
    \midrule
        \centering $\V_i$ & Set of types of bidder $i$.\\
        \centering $\V$ & Set of joint bidders' types, $\V\defeq\bigtimes\V_i$.\\
        \centering $\cF$ & Probability distribution over $\V$.\\
        \centering $\cG$ & Distribution of quality factors.\\
    \midrule 
        \centering $\gr$ & Set of bidders' groups.\\
        \centering $g$ & Function $g:[n]\to\gr$ such that $g(i)$ is the group of bidder $i$.\\
        \centering $I_{(\cdot)}$ & Set of bidders (i.e., $I_{(\cdot)}\subseteq [n]$) belonging to group $(\cdot)\in\gr$.\\
        \centering $I_{(\cdot)}[j]$ & Bidder with the $j$-th highest valuation among bidders of group $(\cdot)$.\\
        \centering $J_{(\cdot)}[j]$ & Slot with the $j$-th highest click-through rate among slots assigned to group $(\cdot)$.\\
    \midrule
        \centering$\B_i$ & Set of available bids of bidder $i$.\\
        \centering$\B$ & Set of joint bid profiles, $\B\defeq \bigtimes_{i\in[n]}\B_i$.\\
        \centering$\sigma_i$ & Bidding strategy of bidder $i$, $\sigma_i:\V_i\to\Delta_{\B_i}$.\\
        \centering$\Sigma_i$ & Set of individual bidding strategies of bidder $i$.\\
        \centering$\Sigma$ & Set of joint bidding strategies.\\
    \midrule 
        \centering $p_i(\bv,\gammav)$ & Price charged to bidder $i$ under bid vector $\bv$ and quality factors $\gammav$.\\
        & Abbrev.: $p_i$ when $\bv$ and $\gammav$ are clear from the context.\\
        \centering $\pi(\bv,\gammav,j)$ & Bidder assigned by the mechanism to slot $j$.\\
        & Abbrev.: $\pi(j)$ when clear from context.\\
        \centering $\nu(\bv,\gammav,i)$ & Slot assigned by the mechanism to bidder $i$.\\
        & Abbrev.: $\nu(i)$ when clear from context.\\
    \midrule
        \centering $u_i(\bv,\vvec,\gammav)$ & Utility observed by bidder $i$ under bids vector $\bv$, realized valuations $\vvec$, and quality factors $\gammav$.\\
        \centering $\alg(\bv)$ & Value obtained by the mechanism under bid vector $\bv$.\\
        \centering $SW(\bv,\vvec,\gammav)$ & Overall social welfare provided by the mechanism under bid vector $\bv$, realized valuations $\vvec$, and quality factors $\gammav$.\\
        \centering $SW_{(\cdot)}(J')$ & Social welfare for bidders in group $(\cdot)\in\gr$ under the best allocation of slots in $J'\subseteq [m]$ to bidders in $I_{(\cdot)}$.\\
    \bottomrule
    \end{tabular}
    \caption{Summary of notation used in the paper.}
    \label{tab:notation}
\end{table}

\fi

\clearpage
\section{Group Fair Division Schemes}
\label{app:fairalgorithms}

In this section, we report the the pseudocode for the two fair division schemes described in~\cref{sec:fair gsp auctions}.

\paragraph{$\beta$-Fair GSP mechanism.} The first fair division scheme (see Algorithm~\ref{alg:RR_EF1}) takes as input the two groups of advertisers $I_h$ and $I_\ell$,  the $\beta=\xi_\ell/\xi_h$ parameter, the GSP allocation $\pi^\textsc{g}$, and returns the allocation of $\beta$-EF1 GSP. 
Let $I_{(\cdot)}[j]$ be the bidder of group $(\cdot)\in\gr$ with the $j$-th highest effective bid within that group. This means that, given the GSP allocation, 
$\nu^\textsc{g}(I_h[1])\le \nu^\textsc{g}(I_h[2]))\le \ldots\le\nu^\textsc{g}(I_h[|I_h|])$, and analogously for group $\ell$.
The Round-Robin scheme allocates the first $\xi_h$ slots to the majority group $h$ (if there are enough bidders from group $h$, see Line~\ref{step: rr allocate group h}). Then, it proceeds by allocating $\xi_\ell$ slots to group $\ell$ (Line~\ref{step: rr allocate group ell}).
Bidders belonging to the same group are allocated following the GSP ordering (i.e., in descending order of effective bid (see Lines~\ref{step: rr allocation 1},~\ref{step: rr allocation 2},~\ref{step: rr allocation 3},~\ref{step: rr allocation 4}). 
Finally, when at least one of the two groups has been fully assigned, the algorithm completes the allocation by allocating the remaining slots to the group that still has bidders which have not been given a slot (Lines~\ref{step: rr fill h},~\ref{step: rr fill ell}).

\paragraph{$k$-Group Fair EF1 mechanism.} The pseudocode of the Round Robin algorithm for EF1 allocation of $k$ groups is given in Algorithm \ref{alg:k-RR_EF1}.

\begin{minipage}{.5\textwidth}
\begin{algorithm}[H]\small
\caption{Round-Robin $\beta$-EF1 allocation}
\begin{algorithmic}[1]
\State \textbf{Input}: Groups $I_h$ and $I_{\ell}$, $\beta=\xi_\ell/\xi_h$, $\xi_h\in\mathbb{N}^+$, $\xi_l =1$, GSP allocation $\pi^\textsc{g}$
\State $\pi^\textsc{c}(1) \gets I_h[1], \pi^\textsc{c}(1) \gets I_l[1]$
\State $j=3,\,i_h=2,\,i_\ell=2$
\While {$i_h\le |I_h|$ and  $i_\ell\le |I_{\ell}|$}
\State $l=0$
\While{$j\le n$ and $l<\xi_h$ and $i_h\le |I_h|$}\label{step: rr allocate group h}
\State $\pi^\textsc{c}(j) \gets I_h[i_h]$\label{step: rr allocation 1}
\State $i_h=i_h+1$,\, $j=j+1$,\, $l=l+1$
\EndWhile 
\If{$j\le n$ and $i_{\ell}\le |I_{\ell}|$}\label{step: rr allocate group ell}
\State $\pi^\textsc{c}(j) \gets I_\ell[i_{\ell}]$\label{step: rr allocation 2}
\State $i_\ell=i_\ell+1$,\, $j=j+1$,\,  
\EndIf 
\EndWhile 
\While {$j\le n$ and $i_h \le|I_h|$}\label{step: rr fill h}
\State $\pi^\textsc{c}(j) \gets I_h[i_h]$\label{step: rr allocation 3}
\State $i_h=i_h+1$,\, $j=j+1$
\EndWhile
\While {$j\le n$ and $i_\ell \le |I_\ell|$}\label{step: rr fill ell}
\State $\pi^\textsc{c}(j) \gets I_\ell[i_\ell]$\label{step: rr allocation 4}
\State $i_\ell=i_\ell+1$,\, $j=j+1$
\EndWhile
\Return allocation $\pi^\textsc{c}$
\end{algorithmic}
\label{alg:RR_EF1}
\end{algorithm}
\end{minipage}
\hspace{.5cm}
\begin{minipage}{.45\textwidth}
\begin{algorithm}[H]\small
\caption{$k$-Group Round-Robin EF1 allocation}
\begin{algorithmic}[1]
\State \textbf{Input}: Groups $I_w$, $w\in [k]$, GSP allocation $\pi^\textsc{g}$
\State $j=1,\,i_w=1\,\forall w\in [k],\,w=1$
\While {$j\leq n$}
\If{$i_w \le |I_w|$} 
\State $\pi^\textsc{c}(j) \gets I_w[i_w], i_w=i_w+1$
\EndIf
\If{$w < k$} 
\State{$w=w+1$}
\Else 
\State{$w=1$}
\EndIf 
\EndWhile
\Return allocation $\pi^\textsc{c}$
\end{algorithmic}
\label{alg:k-RR_EF1}
\end{algorithm}
\end{minipage}

\paragraph{Group envy-cycle elimination} The second fair division scheme (Algorithm~\ref{alg:GECE}) takes as input the  two groups of advertisers $I_h$ and $I_\ell$,  the $\beta$ parameter, the GSP allocation $\pi^\textsc{g}$, and returns the allocation obtained via the GECE algorithm to be employed by GSP-EFX.
First, the algorithm computes an allocation of slots to groups (Lines from~\ref{step:gece start phase 1} to~\ref{step:gece end phase 1}). Then, the algorithm proceeds by allocating slots within each group by following the GSP ordering (Lines from~\ref{step:gece start phase 2} to~\ref{step:gece end phase 2}).
Algorithm~\ref{alg:GECE} starts by assigning the first slot to group $h$, and the second slot to group $\ell$, respectively. 
Then, the algorithm iterates through the slots in decreasing order of click-through rate. For each subsequent slot $j$, the algorithm checks if groups $h$ and $\ell$ envy each other, and, if this is the case, the algorithm swaps their allocations (Line~\ref{step: gece swap}).
Otherwise, if group $\ell$ does not envy group $h$, then the next slot is assigned to group $h$, else, if $\ell$ envies $h$, the slot is assigned to group $\ell$.  
Finally, the allocation $\pi^\textsc{c}$ is built by assigning the per-group allocations $J_h$ and $J_\ell$ in decreasing order of effective bid (i.e., by following the GSP ordering within each group).

\paragraph{$k$-Group envy-cycle elimination} The last fair division scheme which we consider is \emph{$k$-group envy-cycle elimination} ($k$-GECE), reported in \cref{alg:k-GECE}. 
This algorithm is an extension of GECE to the case of $k$ groups. In particular, given the current assignment of slots to groups (defined by sets $J_1,\ldots,J_k$), the algorithm computes the envy graph $G$ (Line~\ref{line:envy graph}). In such graph, every node represent a group, and there is a directed edge from group $g$ to group $g'$ if and only if $g$ envies $g'$. Then, $k$-GECE determines an alternative allocation such that $G$ is acyclic (Lines~\ref{line:acyclic start}
--\ref{line:acyclic end}, see \cite[Lemma 2.2]{lipton2004approximately}).
Since $G$ is acyclic there must be at least one group $w\in [k]$ which is not envied by other groups. The current element $j$ is assigned to such group $w$.
Finally, the allocation $\pi^\textsc{c}$ is built by assigning the per-group allocations $J_w$, $w\in [k]$, in decreasing order of effective bid (i.e., by following the GSP ordering within each group).

\begin{minipage}{.4\textwidth}

\begin{algorithm}[H]\small
\caption{Group Envy-Cycle-Elimination}
\begin{algorithmic}[1]
\State{\textbf{Input}: Groups $I_h$ and $I_\ell$, $\beta\le 1$, GSP allocation $\pi^\textsc{g}$, bid profile $\bv$}
\State{$J_h\gets\{1\},\, J_\ell\gets\{2\}$}
\For{$j=3$ \textbf{to} $n$}\label{step:gece start phase 1}
\If{$\alg_h(J_h,\bv)<\beta \alg_h(J_\ell,\bv)$ and $\alg_\ell(J_\ell,\bv)<\beta \alg_\ell(J_h,\bv)$}
\State{exchange $J_h$ with $J_\ell$}\label{step: gece swap}
\EndIf
\If{$\alg_\ell(J_\ell,\bv)\geq\beta \alg_\ell(J_h,\bv)$}
\State{$J_h = J_h \cup \{j\}$}
\Else
\State{$J_\ell = J_\ell \cup \{j\}$}\label{step:gece end phase 1}
\EndIf 
\EndFor
\State{$i_h=1,\,i_\ell=1$}\label{step:gece start phase 2}
\For{$j\in[n]$}
\If{$j\in J_h$}
\State $\pi^\textsc{c}(j)\gets I_h[i_h]$
\State $i_h=i_h+1$
\Else
\State $\pi^\textsc{c}(j)\gets I_\ell[i_\ell]$
\State $i_\ell=i_\ell+1$\label{step:gece end phase 2}
\EndIf
\EndFor
\Return{allocation $\pi^\textsc{c}$}
\end{algorithmic}
\label{alg:GECE}
\end{algorithm}
\end{minipage}
\hspace{.5cm}
\begin{minipage}{.5\textwidth}
\vspace{-1.2cm}
\begin{algorithm}[H]\small
\caption{$k$-Group Envy-Cycle Elimination}
\begin{algorithmic}[1]
\State{\textbf{Input}: Groups $I_1,\ldots,I_k$, item set $J$, bid profile $\bv$}
\State{$J_w\gets\{w\}, \forall w \in [k]$}\label{step: gece init}
\For{$j=k+1$ to $n$}
\State{Construct the envy graph $G$ for the current allocation $(J_w)_{w=1}^k$}\label{line:envy graph}
\While{There is no node of in-degree $0$ in $G$}\label{line:acyclic start}
\State{Find a cycle $i_1\leftarrow i_2, \ldots, \leftarrow i_r$ in $G$}
\State{$B=J_{i_1}$}
\For{$k=1$ to $r-1$}
\State{$J_{i_k}=J_{i_{k+1}}$}
\EndFor
\State{$J_{i_r}=B$}
\State{Update the envy graph $G$}\label{line:acyclic end}
\EndWhile
\State{Let $w\in[k]$ be a group of in-degree 0 in $G$}
\State{$J_w = J_w \cup \{j\}$}
\EndFor
\State{Compute the allocation $\pi^\textsc{c}$ analogously to Lines~\ref{step:gece start phase 2}--\ref{step:gece end phase 2} of \cref{alg:GECE}}
\State{\textbf{return} allocation $\pi^\textsc{c}$}
\end{algorithmic}
\label{alg:k-GECE}
\end{algorithm}
\end{minipage}

\section{Standard Definitions}\label{sec:appendix details}

\subsection{Bayesian Coarse Correlated Equilibria}

The notion of {\em Bayesian coarse correlated equilibrium} (see, e.g.,~\cite{forges1993five,bergemann2011correlated}) generalizes to games with incomplete information the standard notion of \emph{coarse correlated equilibrium} for games with complete information by~\cite{moulin1978strategically}.
In particular, we define a Bayesian coarse correlated equilibrium as follows.
\begin{definition}\label{def:bayesian cce}
    A joint strategy $\sigma\in\Sigma$ is a \emph{Bayesian coarse correlated equilibrium} (BCCE) for distributions $(\cF,\cG)$ if, for each player $i\in[n]$, type $v_i\in\V_i$, and deviation bid $b_i'\in\B_i$, it holds:
    \begin{equation}
        \expe_{\substack{\vvec_{-i}\sim\cF_{v_i}\\\gammav\sim\cG}}\Bigg[\sum_{\bv\in\B}\sigma[(v_i,\vvec_{-i}),\bv] \bigg( u_i(\bv,(v_i,\vvec_{-i}),\gammav) \\-  u_i((b'_i,\bv_{-i}),(v_i,\vvec_{-i}),\gammav)\bigg) \Bigg]\ge 0,
    \end{equation}
    where $\cF_{v_i}$ is the posterior distribution over $\bigtimes_{i'\ne i}\V_{i'}$ having observed type $v_i$ for bidder $i$.
\end{definition}

\subsection{Semi-Smoothness}\label{app: semismoothness}

Lucier  and  Paes  Leme~\shortcite{lucier2011gsp} introduce semi-smoothness as an extension of the notion of smoothness by Roughgarden~\shortcite{roughgarden2015intrinsic}. The former notion can be used to derive price of anarchy bounds even in the Bayesian setting with arbitrarily correlated types. 
Formally, by letting $SW(\sigma)$ be the social welfare generated by strategy $\sigma$ for some (implicitly defined) game, the definition of $(\lambda,\mu)$-semi-smoothness reads as follows.

\begin{definition}\label{def:semi smoothness}[Def. 2 by Lucier and Paes Leme~\shortcite{lucier2011gsp}]	
	A game is \emph{$(\lambda,\mu)$-semi-smooth} if there exists some strategy $\sigma'=(\sigma_1',\ldots,\sigma_n')\in\bigtimes_i\Sigma_i$ maximizing the social welfare such that, for any joint strategy profile $\sigma\in\Sigma$ it holds 
	\[
	\sum_{i\in [n]} u_i(\sigma_i',\sigma_{-i})\ge \lambda SW(\sigma')-\mu SW(\sigma),
	\]
	where $SW$ is the social welfare of the mechanism with an arbitrary (fixed) vector of valuations.
\end{definition}

\section{Omitted Proofs}\label{sec:proofs}

\efoneval*
\begin{proof}
To simplify the proof, let $\beta=1$ (the case in which $\beta<1$ follows directly from the case in which $\beta=1$). Let $J_h$ and $J_\ell$ be the set of slots assigned to group $h$ and $\ell$, respectively, by \cref{alg:RR_EF1}. The crucial idea is that \cref{alg:RR_EF1} computes $J_h$ and $J_\ell$ without employing the reported bid profile, which is used only to determine the per-group ranking. In particular, considering the case of group $h$, we know that $\alg_h(J_h,\bv)\ge \alg_h(J_\ell\setminus J_\ell[1],\bv)$. Then, by definition of $\alg$ and by following \cref{alg:RR_EF1} executed with bid profile $\bv$, we have
\[
\sum_{j=1}^{|J_h|-1}\mleft(\alpha_{J_h[j],h}-\alpha_{J_\ell[j+1],h}\mright)b_{I_h[j]} + \alpha_{J_h[|J_h|],h}b_{I_h[|J_h|]}\ge 0.
\]
where we assume $\alpha_{J_{(\cdot)}[j]}=0$ whenever $j\ge |J_{(\cdot)}|$. Since the split in $J_h,J_\ell$ does not depend on the bid profile $\bv$, the above inequality holds for any non negative bid profile $\bv'$, and in particular it holds when $\bv'=\vv$. The case of group $\ell$ is analogous. 
\end{proof}

\EFXgece*
\begin{proof}
The claim of the proof clearly holds for the base case when $J_h=J_\ell=\varnothing$.  For the inductive hypothesis, assume it holds  before the next item $x\in [m]$ is assigned. We distinguish between three cases:
\begin{itemize}
\item Let us first consider the case of a slot $x$ assigned to group $\ell$.  The set of slots assigned to $\ell$ is  $J_\ell \cup \{x\}$. EFX clearly holds for 
group $\ell$, since it is allocated one additional slot. Let us consider group $h$.
We know that, given bid profile $\bv$, for any $y\in J_\ell$,  
\[\alg_h(J_h,\bv) \geq  \beta \alg_h(J_\ell,\bv) \geq \beta \alg_h(J_\ell \cup \{x\}\setminus \{y\},\bv)\]
since $\alg_h(\{y\},\bv) \geq \alg_h(\{x\},\bv)$ (i.e., $y$ has higher click through rate than $x$).
This proves $\beta$-EFX for group $h$. 

\item Next, we consider the case of a swap between the two allocations (Line~\ref{step: gece swap} of Algorithm~\ref{alg:GECE}). Before a swap happens, it holds  $\alg_h(J_h,\bv)< \beta \alg_h(J_\ell,\bv)$ and $\alg_\ell(J_\ell,\bv)< \beta \alg_\ell(J_h,\bv)$,  and, therefore, EFX immediately holds after the swap for any $\beta\leq 1$.

\item Finally, we consider the case of $\alg_h(J_h,\bv)\geq \beta \alg_h(J_\ell,\bv)$ and $\alg_\ell(J_\ell,\bv)\geq \beta \alg_\ell(J_h,\bv)$ before slot $x$ is assigned to group $h$.  In this last case, $\beta$-EFX still holds for group $h$, that receives one more item. It also holds for group $\ell$ since $\alg_\ell(J_\ell,\bv)\geq \beta \alg_\ell(J_h \cup\{x\}\setminus \{y\},\bv)$ for each $y\in J_h$ since $\alg_\ell(\{x\},\bv) \leq \alg_\ell(\{y\},\bv)$. 
\end{itemize}
This proves the statement.
\end{proof}

\kgece*
\begin{proof}
Consider \cref{alg:k-GECE}. We prove the statement by induction. The claim of the proof clearly holds for the base case in which the algorithm is initialized with $J_w\gets\{w\}, w \in [k]$ (\cref{step: gece init}). Moreover, there exists no cycle in the envy graph after the initialization.  As inductive hypothesis, assume the claim holds before the next item $j\in J$ is assigned. We distinguish between two cases:
\begin{itemize}
\item There exists a cycle in the envy graph that needs to be solved before item $j$ is assigned.  Since the allocation was EFX after the previous item $j-1$ was assigned, it is still EFX since all the groups in the cycle get a better allocation (see \cite[Lemma 2.2]{lipton2004approximately}). 
\item Next, consider the case that there is no cycle and item $j$ is assigned to group $g$ that is not envied by any other group. Clearly, the allocation is still EFX for group $g$ since it gets assigned one additional item. For all other groups $w\ne g$, we know that, given bid profile $\bv,$ $\alg_w(J_w,\bv)\geq \alg_w(J_g,\bv)$ before slot $j$ is assigned to group $g$.  In this last case, EFX still holds for group $w$ after item $j$ is assigned since $\alg_w(J_w,\bv)\geq \alg_w(J_g \cup\{j\}\setminus \{y\},\bv)$ for each $y\in J_g$ since $\alg_w(\{j\},\bv) \leq \alg_w(\{y\},\bv)$. 
\end{itemize}
This proves the statement.
\end{proof}

\efficiencyOne*
\begin{proof}
We use the results of Section~\ref{sec:fair gsp auctions} on Group $\beta$-EF1 fairness. 
The social welfare obtained through the composition of the two mechanisms can be lower bounded as follows.

\begin{eqnarray}
\label{eqn:pricevariation}
 \alg^{\textsc{c}}(\bv) &=&  \sum_{i\in I_h\cup I_\ell} \alg_i^{\textsc{c}}(\bv) \nonumber\\
 		      &\geq& \sum_{i\in I_h}\gamma_{g(i)}  b_i \alpha_{\nu^{\textsc{g}}(i),g(i)}   + \sum_{i\in I_\ell} \gamma_{g(i)} b_i   \alpha_{\nu^{\textsc{g}}(i),g(i)}- \sum_{\substack{i\in I_h:\\ \nu^{\textsc{c}}(i)>\nu^{\textsc{g}}(i)}} \gamma_{g(i)} b_{i}  \mleft(\alpha_{\nu^{\textsc{g}}(i),g(i)} - \alpha_{\nu^{\textsc{c}}(i),g(i)}\mright) \label{eq0} \\
 		      &=& \sum_{i\in I_\ell}  \gamma_{g(i)} b_i \alpha_{\nu^{\textsc{g}}(i),g(i)} + \sum_{\substack{i\in I_h:\\ \nu^{\textsc{c}}(i)\leq\nu^{\textsc{g}}(i)}}\gamma_{g(i)}  b_i  \alpha_{\nu^{\textsc{g}}(i),g(i)} + \sum_{\substack{i\in I_h:\\ \nu^{\textsc{c}}(i)>\nu^{\textsc{g}}(i)}} \gamma_{g(i)} b_{i}   \alpha_{\nu^{\textsc{c}}(i),g(i)} \nonumber\\ 
 		      &\geq& 
 		      \alg_\ell^\textsc{g}(\bv) + \sum_{j=1}^{\lceil n/(1+\beta)\rceil} \gamma_{g(\pi^{\textsc{g}}(j))} b_{\pi(j)} \alpha_{\lceil(1+\beta)j\rceil -1,g(\pi^{\textsc{g}}(j)))}\label{eq1}\\
		      &\geq&
		      \alg_\ell^{\textsc{g}} (\bv) + \frac{1}{(1+\beta)} \sum_{j=1}^{n} \gamma_{g(\pi^{\textsc{g}}(j))}  b_{\pi^{\textsc{g}}(j)}  \alpha_{j,g(\pi^{\textsc{g}}(j))}\label{eq2}\\
		      &=& 	
		      \alg_\ell^{\textsc{g}}(\bv) + \frac{1}{(1+\beta)}  \alg^{\textsc{g}}(\bv).\nonumber
\end{eqnarray}

\noindent
Equation~\eqref{eq0} follows from Assumption \ref{assumption1}, which is stating that the application of the composite mechanism will always increase the value of the minority group $\ell$, while the majority group $h$ will experience a value loss.
Equation~\eqref{eq1} follows since the loss of a bidder $\pi^{\textsc{g}}(j)$ is positive if assigned by round robin to a slot with lower quality. We observe that this can only happen  for the first $\lceil n (1/(1+\beta))\rceil $ bidders in the GSP solution  that are shifted from a position  $j$ to a position of lower quality. Given that round robin alternates $\xi_h$ bidders of group $h$ with $\xi_\ell$ bidders of group $\ell$, with $\beta=\xi_\ell/\xi_h$, the landing position of the bidder assigned from GSP at position $j$  is never worse than position $\lceil(1+\beta)j\rceil -1$. Clearly, the bound also holds for the bidders of group $h$ and $\ell$ that are not shifted in the composed mechanism.  
Equation~\eqref{eq2} follows from the fact that \[\sum_{j=1}^{\lceil n/(1+\beta)\rceil} \gamma_{g(\pi^\textsc{g}(j))} b_{\pi^\textsc{g}(j)} \alpha_{\lceil(1+\beta)j\rceil -1,g(\pi^{\textsc{g}}(j)))}\]
has always the term of the first slot followed by at least $\xi_h$ terms out of $\xi_h+\xi_\ell$, with $\xi_\ell\leq \xi_h$, corresponding to bid values that are at least as high as those of GSP. Therefore, at least a \[\frac{\xi_h}{\xi_h+\xi_\ell} = \frac{1}{(1+\beta)}\] fraction of $\alg^{\textsc{g}}(\bv)$ is recovered. This completes the proof.
\end{proof}

\bbOne*
\begin{proof}
We prove that the loss in the value obtained through the $\beta$-Fair GSP mechanisms is at most equal to the payments of GSP. We derive the $2$-budget balance result given that the compensation for each bidder is equal to twice the loss in value according to the pricing rule in~\cref{eq:payments} computed on the bid vector $\bv$. 

Given the GSP allocation $\pi^\textsc{g}$, it can be  observed that the loss in value computed on bid vector $\bv$ is: 
\begin{eqnarray*}
\Delta \alg(\bv) &\leq&  \sum_{j=1}^{\lceil n/(1+\beta)\rceil} b_{\pi^\textsc{g}(j)} \gamma_{g(\pi^\textsc{g}(j))} \mleft(\alpha_{j,g(\pi^\textsc{g}(j))} - \alpha_{\lceil (1+\beta)j\rceil -1,\,\,g(\pi^\textsc{g}(j))}\mright) \\ 
&\leq& \sum_{j=1}^{\lceil n/(1+\beta)\rceil }b_{\pi^\textsc{g}(j+1)}\gamma_{g(\pi^\textsc{g}(j+1))}\alpha_{j+1,\,\,g(\pi^\textsc{g}(j+1))} \\ 
&\leq& \sum_{j=1}^n  p_{\pi^\textsc{g}(j)}^\textsc{g}, 
\end{eqnarray*}

The above derivation is based on the observation that, in the worst case, the round robin mechanism is shifting the first $\lceil n/(1+\beta)\rceil$ bidders to a worse slot of index at most equal to $\lceil(1+\beta)j\rceil -1$. Moreover, the second inequality holds true since, by construction of Algorithm~\ref{alg:RR_EF1} and by~\cref{eq:payments}, the first term of the summation (i.e., for $j=1$) is equal to 0 since by Assumption \ref{assumption2} the first slot is assigned by GSP to the majority group, and since $\gamma_{(\cdot)}\in [0,1]$ for each group $(\cdot)\in\gr$. This concludes the proof.
\end{proof}

\ALGEfx*
\begin{proof}

Let $J_h^{\textsc{g}}$ and $J_\ell^{\textsc{g}}$ be, respectively,  the sets of slots assigned by GSP to groups $h$ and $\ell$, and let $J_h^{\textsc{c}}$ and $J_\ell^{\textsc{c}}$ be the set of slots assigned to groups $h$ and $\ell$ obtained by applying Group Envy-Cycle-Elimination (GECE) with $\beta=1$ to the outcome of GSP. Moreover, let $x_h\in[m]$ and $x_\ell\in[m]$, $x_h\ne x_\ell$, the last slots assigned to $J_h^{\textsc{c}}$ and $J_\ell^{\textsc{c}}$ by GECE, respectively. 
We denote by  $\alg_{(\cdot)}(J)$ the social welfare on bid vector $\bv$ that advertisers of group $(\cdot)\in\gr$ have on items $J$ if assigned in order of decreasing quality.  
By the definition of EFX on bid declarations, it  holds $\alg_h(J_h^{\textsc{c}}) \ge  \alg_h(J_\ell^{\textsc{c}}\setminus \{x_\ell\})$ and $\alg_\ell(J_\ell^{\textsc{c}}) \ge  \alg_\ell(J_h^{\textsc{c}}\setminus \{x_h\})$.

The case of only two slots is trivial. Therefore, we focus on the setting in which there are at least three slots. As a first step, we consider the case in which there are at least two slots assigned to each group. 
The following holds:

\begin{eqnarray*}
\alg_h(J_h^{\textsc{c}}) + \alg_\ell(J_\ell^{\textsc{c}}) &\geq&  
                            \frac{1}{2} (\alg_h(J_h^{\textsc{c}})+ \alg_h(J_\ell^{\textsc{c}}\setminus \{x_\ell\}))+ \frac{1}{2} (\alg_\ell(J_\ell^{\textsc{c}})+ \alg_\ell(J_h^{\textsc{c}}\setminus \{x_h\}))\\
                            &\geq& \frac{1}{2} (\alg_h(J_h^{\textsc{g}}) - \alg_h(\{x_\ell\}))+ \frac{1}{2} (\alg_\ell(J_\ell^{\textsc{g}}) - \alg_\ell(\{x_h\})) \\ 
                            &\geq& \frac{1}{2} (\alg_h(J_h^{\textsc{g}}) + \alg_\ell(J_\ell^{\textsc{g}})) -
                            \frac{1}{2}(\alg_h(J_h^{\textsc{c}}) + \alg_\ell(J_\ell^{\textsc{c}})),
\end{eqnarray*}
where the last inequality follows because, at the time in which the last item is  added to $J_\ell^{\textsc{c}}$, it holds $\alg_h(J_h^{\textsc{c}}) \geq \alg_h(J_\ell^{\textsc{c}})\geq \alg_h(\{x_\ell\})$ in case $J_\ell^{\textsc{c}}$ already contains at least one item when $x_\ell$ is added to $J_\ell^{\textsc{c}}$. Analogously, we can prove that $\alg_\ell(J_\ell^{\textsc{c}}) \geq \alg_\ell(J_h^{\textsc{c}})\geq \alg_\ell(\{x_h\})$.  
This yields the following
\begin{eqnarray*}
\alg_h(J_h^{\textsc{c}}) + \alg_\ell(J_\ell^{\textsc{c}}) &\geq& \frac{1}{3} (\alg_h(J_h^{\textsc{g}}) + \alg_\ell(J_\ell^{\textsc{g}})),
\end{eqnarray*}
for the case in which we have at least two elements in both $J_h^{\textsc{c}}$ and $J_\ell^{\textsc{c}}$. 

We are left to consider the case in which we have one single element either in $J_h^{\textsc{c}}$ or in $J_\ell^{\textsc{c}}$. Assume $J_h^{\textsc{c}} = \{x_h\}$.  If this is the case, this is the first and unique element that is added to $J_h^{\textsc{c}}$. This implies  $\alg_h(J_h^{\textsc{c}}) = \alg_h(\{x_h\})\geq \alg_h(J_\ell^{\textsc{c}})$, because no other elements other than $x_h$ have been added to the allocation for group $h$. Therefore, we obtain:
\begin{eqnarray*}
\alg_h(J_h^{\textsc{c}}) + \alg_\ell(J_\ell^{\textsc{c}}) \hspace{-.2cm} &\geq& \hspace{-.2cm} \alg_h(J_\ell^{\textsc{c}}) + \alg_\ell(J_\ell^{\textsc{c}}) \\
                            &\geq& \hspace{-.2cm}\frac{1}{2} (\alg_h(J_h^{\textsc{c}})+ \alg_h(J_\ell^{\textsc{c}})) +   \frac{1}{2} (\alg_\ell(J_\ell^{\textsc{c}}))\\
                            & \geq& \frac{1}{2} (\alg_h(J_h^{\textsc{g}}))
                            + \frac{1}{2} (\alg_\ell(J_\ell^{\textsc{g}})),
\end{eqnarray*}
where the last derivation follows from Assumption \ref{assumption1} stating that the value for the minority group $\ell$ in the allocation obtained through GSP-EFX is higher than the value for group $\ell$ under the allocation computed through GSP. 

The case of one single element in $J_\ell^{\textsc{c}}$ can be handled similarly. This concludes the proof.
\end{proof}

\bbEfx*
\begin{proof}
We prove that the loss in terms of value between the allocation obtained through GSP-EFX, and the allocation obtained via GSP, is at most twice the payments of GSP. Given that the compensations are equal to twice the loss in value computed on the actual bids, the claim of $4$-budget balance follows.  

Let us denote by $\nu^{\textsc{g}}(i)$ and by $\nu^{\textsc{c}}(i)$ the slots assigned to advertiser $i$ in the allocations computed through GSP and GSP-EFX, respectively. Let us denote by $\texttt{prev}_{g(i)}(i)$ the advertiser that precedes $i$ in the ordering of group $g(i)$.  
We observe that in the GSP-EFX solution provided by the GECE algorithm (Algorithm~\ref{alg:GECE}), the first two slots are always assigned to different groups. If the assignment is in agreement with  the solution of GSP, the claim is proved with a factor of $1$ by bounding the social welfare loss of all the advertisers different from the first bidders of the two groups. In particular, for each bid profile $\bv\in\B$ it holds for the loss of social welfare: 
\begin{eqnarray}
\Delta \alg(\bv) &\leq&  \sum_{i\in J_\ell\cup J_h: \nu^{\textsc{c}}(i)> \nu^{\textsc{g}}(i)} b_{i} \gamma_{g(i)} (\alpha_{\nu^{\textsc{g}}(i),g(i)} - \alpha_{\nu^{\textsc{c}}(i),g(i)}) \nonumber\\ 
&\leq& \sum_{i\in J_\ell\cup J_h: \nu^{\textsc{c}}(i)> \nu^{\textsc{g}}(i)}
b_{i} \gamma_{g(i)} \alpha_{\nu^{\textsc{g}}(i),g(i)}\nonumber \\ 
&\leq& \sum_{i\in J_\ell\cup J_h: \nu^{\textsc{c}}(i)> \nu^{\textsc{g}}(i)}  p_{\texttt{prev}_{g(i)}(i)}^{\textsc{g}}\nonumber \\
&\leq& \sum_{i=1}^n p_i^{\textsc{g}}, \label{eq:delta alg ub}
\end{eqnarray}
with the last two equations derived by the definition of $p^{\textsc{g}}_i$ and $\gamma_{(\cdot)} \in [0,1]$, for each group $(\cdot)\in\gr$.

A similar argument that will be reported in the full version of the paper applies if the GSP-EFX assignment of the first bidder of the two groups is not in agreement with  the solution of GSP. 
\end{proof}

\smooth*
\begin{proof}
Consider an arbitrary bid profile $\bv\in\B$, vector of valuations $\vvec\in\V$, and quality factors $\gammav$. Moreover, the composite mechanism is individually rational at the equilibrium. 
Let $b_i'=v_i/2$ be a deterministic deviation bid for bidder $i$. We have two cases, depending on the outcome of the deviation for bidder $i$. 

\textbf{Case 1:} The composite mechanism \textsc{c} assigns to bidder $i$ a slot $j^\textsc{c}$ such that $\nu^\textsc{g}((b'_i,\bv_{-i}),i)\le j^\textsc{c}$ (i.e., bidder $i$ is penalized in the composition).
Then, for each $v_i$ it holds
\begin{align*}
  u_i^\textsc{c}((b'_i,\bv_{-i}),v_i) & = \alpha_{j^\textsc{c},g(i)} \gamma_{g(i)} v_i - p^\textsc{c}_i\\
    & = \alpha_{j^\textsc{c},g(i)} \gamma_{g(i)} v_i - \mleft(p^\textsc{g}_i-2b_i'\gamma_{g(i)}(\alpha_{\nu^\textsc{g}((b'_i,\bv_{-i}),i),g(i)}-\alpha_{j^\textsc{c},g(i)})\mright)
    \\ & = 
    \alpha_{j^\textsc{c},g(i)} \gamma_{g(i)} v_i + \alpha_{\nu^\textsc{g}((b'_i,\bv_{-i}),i),g(i)}\gamma_{g(i)} v_i - \alpha_{j^\textsc{c},g(i)} \gamma_{g(i)} v_i - p_i^\textsc{g}
    \\ & =
    \alpha_{\nu^\textsc{g}((b'_i,\bv_{-i}),i),g(i)}\gamma_{g(i)} v_i - p_i^\textsc{g}
    \\ & =
    u_i^\textsc{g}((b'_i,\bv_{-i}),v_i),
\end{align*}
where the second equality holds by~\cref{eq:payments}, and the third equality holds by the definition of the deviation bid $b_i'$.

\textbf{Case 2:} The composite mechanism \textsc{c} assigns to bidder $i$ a slot $j^\textsc{c}$ such that $\nu^\textsc{g}((b'_i,\bv_{-i}),i)\ge j^\textsc{c}$ (i.e., bidder $i$ is better of under the composite mechanism). Then $u_i^\textsc{c}((b'_i,\bv_{-i}),v_i)\ge u_i^\textsc{g}((b'_i,\bv_{-i}),v_i)$ for each $v_i$.

Therefore by setting $b_i'=v_i/2$, for any $\bv_{-i}$ and $v_i$, $u_i^\textsc{c}((b'_i,\bv_{-i}),v_i)\ge u_i^\textsc{g}((b'_i,\bv_{-i}),v_i)$.
This also proves the second part of the claim on individual rationality since there exists for each bidder a deviation that does not decrease the utility with respect to GSP that is individually rational, and therefore the composite mechanism is individually rational at the equilibrium.

Let $j^\textsc{g}\defeq\nu^\textsc{g}(\vvec,i)$ be the slot assigned to bidder $i$ through GSP under truthful bidding.
We have to consider two additional cases. 

\textbf{Case 1:} $j^\textsc{g}\ge\nu^\textsc{g}((b'_i,\bv_{-i}),i)$. Then, 
\begin{align*}
u_i^\textsc{g}((b'_i,\bv_{-i}),v_i)
& \ge \frac{1}{2}\alpha_{\nu^\textsc{g}((b'_i,\bv_{-i}),i),g(i)}\gamma_{g(i)} v_i
\\ & \ge \frac{1}{2}\alpha_{j^\textsc{g},g(i)}\gamma_{g(i)} v_i.
\end{align*}

\textbf{Case 2:}
$j^\textsc{g}<\nu^\textsc{g}((b'_i,\bv_{-i}),i)$. Then, since the mechanism is monotone increasing, the effective bid of $\pi^\textsc{g}((b'_i,\bv_{-i}),j^\textsc{g})$ must be greater than or equal to the effective bid of $i$, that is
\begin{align*}
    \alpha_{j^\textsc{g},g(\pi^\textsc{g}((b'_i,\bv_{-i}),j^\textsc{g}))}\gamma_{g(\pi^\textsc{g}((b'_i,\bv_{-i}),j^\textsc{g}))} b_{\pi^\textsc{g}((b'_i,\bv_{-i}),j^\textsc{g})}\ge\frac{1}{2} \alpha_{j^\textsc{g},i}\gamma_{g(i)}v_i.
\end{align*}

Therefore, for any $\bv_{-i}$ and $v_i$,
\begin{eqnarray*}
u_i^\textsc{g}((b'_i,\bv_{-i}),v_i)
\hspace{-.3cm}&\ge&\hspace{-.3cm}
\frac{1}{2} \alpha_{j^\textsc{g},i}\gamma_{g(i)}v_i - \alpha_{j^\textsc{g},g(\pi^\textsc{g}((b'_i,\bv_{-i}),j^\textsc{g}))}\gamma_{g(\pi^\textsc{g}((b'_i,\bv_{-i}),j^\textsc{g}))} b_{\pi^\textsc{g}((b'_i,\bv_{-i}),j^\textsc{g})}.
\end{eqnarray*}

By summing over all players we obtain
\[
\sum_i u_i^\textsc{c}((b'_i,\bv_{-i}),v_i)\ge \sum_i u_i^\textsc{g}((b'_i,\bv_{-i}),v_i)\ge \frac{1}{2}SW^\textsc{g}(\vvec)-\alg^\textsc{g}(\bv).
\]

By~\cref{thm:swl_ref} and by the conservative assumption, we obtain that, for $\beta$-Fair GSP it holds
\begin{align*}
    \sum_i u_i^\textsc{c}((b'_i,\bv_{-i}),v_i) & \ge \frac{1}{2}SW^\textsc{g}(\vvec)-\alg^\textsc{g}(\bv)
    \\ & \ge 
    \frac{1}{2}SW^\textsc{g}(\vvec)-(1+\beta) \alg^\textsc{c}(\bv)
    \\ & \ge 
    \frac{1}{2}SW^\textsc{g}(\vvec)-(1+\beta) SW^\textsc{c}(\bv,\vvec).
\end{align*}

Similarly, by~\cref{thm:rel_GSP_EFX}, in the case of GSP-EFX we obtain that
\begin{align*}
    \sum_i u_i^\textsc{c}((b'_i,\bv_{-i}),v_i) \ge 
    \frac{1}{2}SW^\textsc{g}(\vvec) -3 \cdot SW^\textsc{c}(\bv,\vvec).
\end{align*}
This concludes the proof.
\end{proof}

\pocOne*
\begin{proof}
By~\cref{lemma:semismooth} and \cref{def:smoothness2}, we have that there exists a strategy profile $\sigma'$ of independent bid strategies such that, for any correlated strategy profile $\sigma\in\Sigma$, type profile $\vvec\in\V$, and quality factors $\gammav$, it holds
\begin{equation}
\label{eq:following lemma2}
    \expe_{\bv,\bv'}\mleft[\sum_{i\in[n]} u_i^\textsc{c}((b_i',\bv_{-i}),v_i)\mright]
    \ge \frac{1}{2}\, SW^\textsc{g}(\vvec)-\mleft(1+\beta\mright)\, \expe_{\bv}\mleft[SW^\textsc{c}(\bv,\vvec)\mright],
\end{equation}
\noindent
where we dropped the dependency on $\gammav$ to simplify the notation, and $SW^\textsc{g}(\vvec,\gammav)$ is the social welfare attained by  GSP when agents are truthfully reporting their valuations.
Consider a Bayesian coarse correlated equilibrium $\sigma\in\bcce$ for the composite mechanism. 
Then, by \cref{def:bayesian cce}, we have that for each $i\in I$, and $v_i\in\V_i$,
\[
\expe_{\vvec_{-i},\gammav,\bv}\mleft[ u_i^\textsc{c}(\bv,v_i,\gammav)\mright]\ge\expe_{\vvec_{-i},\gammav,\bv,b_i'}\mleft[ u_i^\textsc{c}((b'_i,\bv_{-i}),v_{i},\gammav)\mright].
\]
Then, 
\begin{align*}
    \expe_{\vvec,\gammav,\bv}\mleft[SW^\textsc{c}(\bv,\vvec,\gammav)\mright] & \ge \expe_{\vvec,\gammav,\bv}\mleft[\sum_{i\in[n]} u_i^\textsc{c}(\bv,v_i,\gammav)\mright]\\
    &\hspace{-0cm} \ge \expe_{\vvec,\gammav,\bv} \mleft[\sum_{i\in[n]}\expe_{b'_i}\mleft[u_i^\textsc{c}((b'_i,\bv_{-i}),v_i,\gammav)\mright]\mright]\\
    &\hspace{-0cm} \ge \expe_{\vvec,\gammav}\mleft[\expe_{\bv,\bv'} \mleft[\sum_{i\in[n]}u_i^\textsc{c}((b'_i,\bv_{-i}),v_i,\gammav)\mright]\mright]\\
    &\hspace{-0cm} \ge \expe_{\vvec,\gammav}\mleft[\frac{1}{2} SW^\textsc{g}(\vvec,\gammav)-\mleft(1+\beta\mright)\, \expe_{\bv}\mleft[SW^\textsc{c}(\bv,\vvec,\gammav)\mright]\mright],
\end{align*}
where the second inequality follows from the definition of Bayesian coarse correlated equilibrium, and the last inequality follows from \cref{lemma:semismooth}.
For any $\vvec$ and $\gammav$ 
we obtain that
\begin{equation*}
\expe_{\vvec,\gammav,\bv}\mleft[SW^\textsc{c}(\bv,\vvec,\gammav)\mright]\ge 
1/2\,\,\expe_{\vvec,\gammav}\mleft[SW^\textsc{g}(\vvec,\gammav)\mright]-\mleft(1+\beta\mright)\, \expe_{\vvec,\gammav,\bv}\mleft[SW^\textsc{c}(\bv,\vvec,\gammav)\mright].
\end{equation*}
This proves our statement.
\end{proof}

\pocTwo*
\begin{proof}
The proof is analogous to that of~\cref{thm:poa beta fgsp}, by employing $\lambda=1/2$, and $\mu=3$.
\end{proof}

\end{document}